\documentclass[11pt]{article}

\usepackage{float}
\usepackage{authblk}

\usepackage{latexsym,amsmath}
\usepackage{booktabs}
\usepackage[dvips]{graphicx}
\usepackage{lscape}
\usepackage{color}
\usepackage[pagewise]{lineno}
\usepackage{graphicx} 
\usepackage {epstopdf}
\usepackage {subfigure}
\usepackage{hyperref}
\usepackage{booktabs} 
\usepackage{amsmath}
\usepackage{multicol}
\usepackage{multirow}
\usepackage{xcolor}
\usepackage{float}
\usepackage{amsthm}
\usepackage{amsfonts}
\usepackage[toc]{appendix}
\usepackage{stackengine}
\usepackage{bbold}
\RequirePackage[OT1]{fontenc}
\usepackage[authoryear]{natbib}
\bibliographystyle{apalike}
\usepackage{adjustbox}
\usepackage{amsmath,amssymb,amsthm,bm,latexsym}
\usepackage{graphicx,psfrag,epsf}
\usepackage[ruled,vlined]{algorithm2e}
\usepackage{epigraph,xcolor}
\usepackage{enumerate}
\usepackage{url}
\usepackage{booktabs}
\usepackage{multirow}
\usepackage{hyperref}
\usepackage{pifont}
\hypersetup{colorlinks=true,citecolor=blue,
	pdfpagemode=FullScreen,linktocpage=true}
\usepackage{cleveref}
\usepackage{amsfonts}
\usepackage{amsmath}
\usepackage{orcidlink}
\usepackage{amsthm}
\usepackage[T1]{fontenc}
\usepackage{multirow}
\usepackage{array}
\usepackage{booktabs}
\usepackage{accents}
\usepackage{multicol}

\newtheorem{lemma}{Lemma}[section]

\newtheorem{remark}{Remark}[section]
\newtheorem{theorem}{Theorem}[section]

\setlength{\topmargin}{-1in} \setlength{\oddsidemargin}{0in}
\setlength{\evensidemargin}{0in} \setlength{\textwidth}{6.5in}
\setlength{\textheight}{8.5in} \setlength{\parindent}{.25in}
\setlength{\parskip}{1ex}



\def\wh{\widehat}



\newcommand{\uiota}             {\mbox{\boldmath$\uiota$}}


\def\beq{\begin{equation}}
\def\eeq{\end{equation}}
\def\beqa{\begin{eqnarray}}
\def\eeqa{\end{eqnarray}}
\def\beqan{\begin{eqnarray*}}
\def\eeqan{\end{eqnarray*}}


\def\bc{\begin{center}}
\def\ec{\end{center}}
\def\btable{\begin{table}[htbp]}
\def\etable{\end{table}}
\def\bfig{\begin{figure}[htbp]}
\def\efig{\end{figure}}

\def\bi{\begin{itemize}}
\def\ei{\end{itemize}}
\topmargin=-23pt

\def\N{\mathbb{N}}

\newcommand{\RNum}[1]{\uppercase\expandafter{\romannumeral #1\relax}}

\numberwithin{equation}{section}  
\newtheoremstyle{general}
{3mm} 
{3mm} 
{\it} 
{} 
{\bfseries} 
{.} 
{.5em} 
{} 

\theoremstyle{general}

\usepackage[top=1in, bottom=1in, left=1in, right=1in]{geometry}

\title{Inference for quantile-parametrized families via\\ CDF confidence bands}

\author[1]{Srijan Chattopadhyay}
\author[2]{Siddhaarth Sarkar}
\author[2]{Arun Kumar Kuchibhotla}
\affil[1]{Indian Statistical Institute, Kolkata}
\affil[2]{Department of Statistics \& Data Science, Carnegie Mellon University}
\date{}

\begin{document}

\maketitle

\begin{abstract}
Quantile-based distribution families are an important subclass of parametric families, capable of exhibiting a wide range of behaviors using very few parameters. These parametric models present significant challenges for classical methods, since the CDF and density do not have a closed-form expression. Furthermore, approximate maximum likelihood estimation and related procedures may yield non-$\sqrt{n}$ and non-normal asymptotics over regions of the parameter space, making bootstrap and resampling techniques unreliable. We develop a novel inference framework that constructs confidence sets by inverting distribution-free confidence bands for the empirical CDF through the known quantile function. Our proposed inference procedure provides a principled and assumption-lean alternative in this setting, requiring no distributional assumptions beyond the parametric model specification and avoiding the computational and theoretical difficulties associated with likelihood-based methods for these complex parametric families. We demonstrate our framework on Tukey Lambda and generalized Lambda distributions, evaluate its performance through simulation studies, and illustrate its practical utility with an application to both a small-sample dataset (Twin Study) and a large-sample dataset (Spanish household incomes). 
\end{abstract}

\section{Introduction}

Parametric models are fundamental tools in statistical analysis and machine learning, providing an effective way to extract information from data. The most important feature of parametric models is that they are fully specified by a finite number of parameters, offering computational efficiency and interpretability through parameters that capture essential characteristics of data distributions. Over the last century, researchers have developed a variety of parametric families to fit data of arbitrary shapes. For such grand flexibility, the parametric models cannot be defined via the density function, which requires integrability. This led to the development of parametric models through the quantile function, which is only required to be non-decreasing \citep{johnson1949systems, tukey1960practical}. (The motivation probably is that one can plot the sample quantiles and find a smooth curve that approximates it.) These are sometimes called quantile-based parametric families \citep{keelin2011quantile}, and include the well-known Tukey Lambda distribution, generalized Lambda distribution, g-and-h family, and metalog distributions. These families have found applications in several avenues, including quality control \citep{silver1977safety},  modeling bioassays \citep{mudholkar1990quantile}, statistical diagnostics \citep{pregibon1980goodness}, robust estimation \citep{filliben1969simple},  and finance \citep{tarsitano2004fitting, corrado2001option}

Unfortunately, the flexibility offered by quantile-based parametric families is countered by the difficulty in estimation and statistical inference. The lack of a closed-form expression for the density or the distribution function implies that standard methods face significant challenges. Despite various approaches proposed in the literature, each has notable limitations. Traditional moment-based methods face fundamental difficulties due to the uncertain existence of moments, making the method of moments difficult to implement, though researchers have developed L-moments methods as a more robust alternative, such as trimmed L-moments \citep{asquith2007moments, dean2013improved}. Likelihood-based approaches present similar computational challenges, as without analytical solutions, researchers have pursued approximation methods to find the MLE \citep{su2007numerical, su2007fitting}. An alternative approach involves quantile matching, which leverages the fact that quantiles are well-defined even when densities are not. For example, \cite{su2010fitting} minimizes the sum of squares of differences between the sample quantiles and the population quantiles.

Furthermore, statistical inference based on these estimators presents fundamental challenges due to their irregular limiting distributions and non-standard rates of convergence. For example, Theorem 1 in \cite{xu2015efficient} demonstrates that asymptotic limits vary depending on the parameter values in the g-and-h distribution. Moreover, despite the lack of theoretical guarantees for bootstrap and subsampling methods in these settings, practitioners continue to rely on them. Recent developments in distribution-free inference offer potential alternatives to traditional asymptotic methods for statistical inference. Universal inference \citep{wasserman2020universal} provides finite-sample valid inference without requiring knowledge of the sampling distribution, but crucially relies on access to likelihood ratios or test statistics with known distributional properties. This makes universal inference inapplicable to quantile-based parametric families where the density function is unavailable or intractable. Similarly, simulation-based inference methods \citep{tomaselli2025robust, dalmasso2020confidence, dalmasso2024likelihood} construct confidence regions by repeatedly simulating from the model and comparing simulated data to observed data through appropriate distance metrics. Although theoretically applicable to quantile-based models, these approaches are computationally expensive and require extensive simulations for each candidate parameter value. To address these issues, we developed a new inference framework specifically designed for parametric models defined via quantile functions. Our framework provides a simpler and more direct approach that is tailored to the structure of quantile-based parametric families. It offers a principled assumption-lean alternative with finite sample coverage guarantees in settings where traditional asymptotic methods may fail. We apply this framework to two important quantile-based parametric distributions: the Tukey Lambda distribution  and the generalized Lambda distribution.

\paragraph{Organization.}
{The remainder of the paper is structured as follows. In \Cref{sec:inf-framework}, we present the general inference framework for quantile-based parametric families and discuss several forms of confidence bands. \Cref{sec:tukey-lambda} illustrates the application of this framework to the Tukey Lambda distribution, while \Cref{sec:gld} extends the discussion to the Generalized Lambda Distribution. In \Cref{sec:simulations}, we evaluate the performance of our method across a variety of settings and compared it with existing approaches in terms of coverage and the resulting area or width of the confidence regions for different sample sizes. Finally, \Cref{sec:conclusion} provides concluding remarks and summarizes the main contributions of the study.}

\section{Inference framework}\label{sec:inf-framework}

{ In this section, we present the general inference framework for quantile-based parametric families, using confidence bands for the CDF. We formalize the data generation and notation as follows: assume that we have i.i.d. univariate data $X_1, \dots, X_n$ from a parametric distribution $F_{\Lambda_0}$, characterized by a parameter $\Lambda_0$ (not necessarily one-dimensional) and a quantile function $Q_{\Lambda_0}(p)$. The key feature of our setting is that we have a closed-form expression for the quantile function $Q_\Lambda$, but not for the CDF $F_\Lambda$ or the associated density.}

{\paragraph{Construction of the Confidence Set.}
Our approach is based on the duality between confidence bands for the CDF and confidence sets for parameters in quantile-based parametric families. We begin by obtaining a nonparametric confidence band for the unknown cumulative distribution function $F_{\Lambda_0}$. Let $X_{(1)} \leq \cdots \leq X_{(n)}$ denote the order statistics based on the data $\{X_i\}_{i \in [n]}$. For each $i \in [n]$, let $(\ell_{n,\alpha,i}, u_{n,\alpha,i})$ represent the lower and upper confidence bounds such that the following event occurs,
\[
\ell_{n,\alpha,i} \leq F_{\Lambda_0}(X_{(i)}) \leq u_{n,\alpha,i}, \quad \forall i \in [n] \text{ with probability } 1 - \alpha.
\]
The particular choice of band (e.g., DKW inequality, Berk-Jones, or D\"umbgen-Wellner) determines the exact form of these limits but not the validity of the procedure. This is further discussed in Section~\ref{subsec:conf-bands-lit}.}

{The confidence band on the CDF can be equivalently expressed as a set of constraints on the quantile function. Using the parametric quantile map $Q_{\Lambda}(p)$, the same inequalities can be inverted to yield,
\[
Q_{\Lambda_0}(\ell_{n,\alpha,i}) \leq X_{(i)} \leq Q_{\Lambda_0}(u_{n,\alpha,i}), \quad \forall i \in [n].
\]
This representation naturally connects the data to the parameter $\Lambda$ through the quantile function, enabling inference without explicitly invoking the likelihood. We then define the confidence region as the set of parameter values consistent with all of these inequalities:
\[
\widehat{\mathrm{CI}}_{n,\alpha}
:=
\Big\{ \Lambda :
Q_{\Lambda}(\ell_{n,\alpha,i}) \leq X_{(i)} \leq Q_{\Lambda}(u_{n,\alpha,i}), \ \forall i \in [n] \Big\}.
\]
By construction, this set includes all parameter values whose implied quantile function is consistent with the joint confidence band for the distribution.}

\paragraph{Coverage Guarantee.}
The resulting procedure satisfies the coverage guarantee
\[
\mathbb{P}_{\Lambda_0}\!\left( \Lambda_0 \in \widehat{\mathrm{CI}}_{n,\alpha} \right) \geq 1 - \alpha,
\]
ensuring that the true parameter lies within the confidence region with at least $1-\alpha$ probability. This quantile-based inversion method thus provides a general route to valid parametric inference in settings where the quantile function has a closed form, using distribution-free confidence bands for the univariate cumulative distribution function. 

\begin{remark}\label{rmk:conf-reg-interprate-hard}
    When the parameter is not one-/two-dimensional, the resulting confidence object can become challenging to interpret or visualize. This issue occurs in Section~\ref{sec:gld} for the generalized Lambda distribution, which has a four-parameter representation. Using an appropriate reparameterization \citep{chalabi2012flexible}, we disentangle the parameters to obtain a more interpretable confidence region. 
\end{remark}

\subsection{CDF confidence bands}\label{subsec:conf-bands-lit}
Given an IID sample $X_1, \ldots, X_n$ from this distribution $F$ and $\alpha \in (0,1)$, a \textit{CDF confidence band} construction refers to constructing $\ell_{n,\alpha,i}, u_{n,\alpha,i}$ such that 
\[ \ell_{n,\alpha,i} \leq F(X_{(i)}) \leq u_{n,\alpha,i}, \quad \forall i \in [n] \text{ with probability } 1 - \alpha .\]
Note that these quantities are fully deterministic. A standard approach uses the empirical CDF $\widehat{F}_n(x) = n^{-1}\sum_{i=1}^n \mathbf{1}\{X_i \leq x\}$ for all $x \in \mathbb{R}$. Plugging $\{X_{(i)}\}_{i \in [n]}$ into the classic Dvoretzky-Kiefer-Wolfowitz (DKW) inequality \citep{dvoretzky1956asymptotic,massart1990tight} gives us $[\ell^{\mathrm{DKW}}_{n,\alpha,i}, u^{\mathrm{DKW}}_{n,\alpha,i}]$ with 
\begin{align}\label{eq:DKW-conf-band}
    \ell^{\mathrm{DKW}}_{n,\alpha,i} = i/n - \sqrt{\frac{\ln(2/\alpha)}{2n}}, \quad u^{\mathrm{DKW}}_{n,\alpha,i} = i/n + \sqrt{\frac{\ln(2/\alpha)}{2n}},  \ \forall i \in [n].
\end{align}

This is a fixed-width confidence band, as $u_{n,\alpha,i} - \ell_{n,\alpha,i} = \sqrt{2\ln(2/\alpha)/n}$ does not depend on $i$. However, variable-width confidence bands can achieve better performance. Asymptotically,  we know that $\sqrt{n}( F(X_{(i)}) - i/n) \overset{d}{\to} N\big(0, i(n - i)/n \big)$ as $n\to\infty$, suggesting that the width should scale as $\sqrt{i(n - i)/n}$. The DKW inequality only achieves a $n^{-1/2}$ rate without the standard deviation factor. Moreover, when $i =\mathcal{O}(1/n)$, $n {F}(X_{(i)})$ has a non-degenerate Poisson distribution limit, suggesting that the width for small (or large) values of $i$ can be asymptotically smaller than $\mathcal{O}(n^{-1/2})$. To address these limitations, \cite{dumbgen2023new} introduced a modified Berk-Jones statistic that achieves rate-optimal confidence bands based on adjusted KL divergence, where $K(a, b) = a\log({a}/{b}) + (1 - a)\log({(1 - a)}/{(1 - b)})$ is the KL divergence between Bernoulli random variables. They propose the following statistic, defined with a tuning parameter $\nu > 3/4$:
\begin{align}\label{eq:DW-test-statistic}
T_{n,\nu}^{\mathrm{DW}} := 
\sup_{z \in \mathbb{R}}\left\{ n\cdot K( \widehat{F}_n(z), F(z))
- C_\nu( \widehat{F}_n(z), F(z))\right\},
\end{align}
where $C_{\nu}(u,v) := \min\{C(t) + \nu D(t):\,\min(u,v) \leq t \leq \max(u,v)\}$ with $C(t) := \log\left(\log\left(\frac{e}{4t(1 -t)}\right)\right)$ and $D(t) := \log(1 + C(t)^2) $. Theorem 2.1 of~\cite{dumbgen2023new} shows that $T_{n,\nu}^{\mathrm{DW}}$ has a non-degenerate limiting distribution for any $\nu > 3/4$ when $F$ is continuous. Furthermore, $T_{n,\nu}^{\mathrm{DW}}$ is distribution-free for any continuous $F$, and for any discontinuous $F$ it is stochastically dominated by its continuous counterpart. Thus, we can use Monte Carlo simulations with standard uniform random variables to obtain accurate estimates of the quantiles. Let $\kappa_n^{\mathrm{DW}}(\alpha)$ be the $(1-\alpha)$-th quantile of $T_{n,\nu}^{\mathrm{DW}}$. This yields the confidence band $[\ell_{n,\alpha,i}^{\mathrm{DW}},\, u_{n,\alpha,i}^{\mathrm{DW}}]$ as follows: 
\begin{align}\label{eq:DW-conf-band}
\ell^{\mathrm{DW}}_{n,\alpha,i} & := \min\left\{u\in [0, i/n):\, n\cdot K(i/n,u) - C_\nu(i/n, u) \leq \kappa_n^{\mathrm{DW}}(\alpha)\right\} \nonumber \\
u^{\mathrm{DW}}_{n,\alpha,i} & := \max\left\{u\in(i/n, 1]:\, n\cdot K(i/n,u) - C_\nu(i/n, u) \leq \kappa_n^{\mathrm{DW}}(\alpha)\right\}.
\end{align}
Compared to the DKW confidence band, the DW confidence band can be much smaller at both endpoints of the distribution. \citet[Thms. 3.5--3.7]{dumbgen2023new} show that $ u_{n,\alpha,i} - \ell_{n,\alpha,i} \lesssim n^{-1/2}$ for all $i$, similar to the DKW confidence band. However, similar to the~\cite{berk1979goodness} confidence band, $u_{n,\alpha,i} - \ell_{n,\alpha,i}\lesssim (\log\log n )/n$ if $\min\{i, n-i\} \lesssim \log\log n$. 

Our inference framework is fully agnostic to the choice of confidence band. Although there are several choices of confidence bands available (see the survey in Section 3.1 of \cite{sarkar2023post} and the references therein), we focus on the DKW and DW bands. These two confidence bands represent an analytically simple baseline and a more complex but optimal alternative. In Section~\ref{subsec:result_tlambda_dwdkw}, we empirically compare the relative performance of these two bands through simulations.

\section{Application}
{In the following sections, we apply the inference procedure to two key parametric families: Tukey lambda distribution and its generalization, the generalized lambda distribution. These distributions are particularly well-suited to our framework as they admit closed-form quantile functions but intractable densities. We demonstrate how the proposed method can be adapted to these settings and empirically show that it outperforms classical approaches.}

\subsection{Tukey Lambda Distribution}\label{sec:tukey-lambda}

 The Tukey Lambda distribution, introduced in \cite{tukey1960practical, tukey1962future}, is a continuous symmetric distribution, defined by a quantile that is parameterized by the parameter $\lambda \in \mathbb{R}$ as follows:
\begin{align}\label{eq:tukey-lambda}
   Q\left( p,\lambda \right)~=~{\begin{cases}{\tfrac{1}{\lambda}{( p^{\lambda }-(1-p)^{\lambda })}} ,&\ {\mbox{ if }}\ \lambda \neq 0~,\\{}\\\ln \big({\frac {p}{1-p }}\big)~,&\ {\mbox{ if }}\ \lambda =0~.\end{cases}}
\end{align}
The support of this distribution varies with $\lambda$, with it being $[-\frac{1}{\lambda},\frac{1}{\lambda}]$ for $\lambda > 0$, and $\mathbb{R}$ for $\lambda < 0$. The distribution has several interesting properties, with the ability to adapt to different tail behaviors depending on the choice of $\lambda$. For example, a major application of this distribution is to construct probability plot correlation coefficient (PPCC) plots, a diagnostic tool to identify appropriate distributional models, particularly for symmetric data \citep{filliben1975probability}. This method takes advantage of the shape parameter $\lambda$ to characterize the behavior of the tail and suggest candidate distributions based on the maximum correlation achieved. (The correlation is computed between the observed order statistics and the theoretical quantiles of the Tukey Lambda distribution.) 

Some important $\lambda$ parameter values and their corresponding distributional behaviour are given in Table~\ref{tab:tukey_lambda}.

\begin{table}[h]
    \centering
    \begin{tabular}{c c c}
    \hline
    $\lambda$ Value & Approximate Distribution & Tail Behavior \\ 
    \hline  
        $-1$ & Cauchy & Extremely heavy tails \\
        $0$ & Logistic & Heavy tails \\
        $0.14$ & Normal (Gaussian) & Moderate tails \\
        $0.5$ & U-shaped & Very light tails \\
        $1$ & Uniform$(-1,1)$ & Bounded support \\
    \hline
    \end{tabular}
     \caption{Tukey Lambda Distribution: Parameter Values and Corresponding Distributional Behaviors}
    \label{tab:tukey_lambda}
\end{table}

{Beyond the lack of closed-form CDF and density, the distribution exhibits dramatically different behavior with respect to $\lambda$. The range of support varies with $\lambda$, and tail behavior dictates very different rates of parameter estimation. Having $\lambda = 1$ yields the uniform distribution with standard parametric estimation rates of $O(1/n)$, whereas $\lambda = -1$ produces a heavy-tailed distribution where estimation rates are typically slower than the standard $O(1/\sqrt{n})$ rate. These complications, coupled with the irregular asymptotic behavior of standard estimators, motivate the application of our distribution-free inference approach.

We apply the general framework from Section~\ref{sec:inf-framework} to the Tukey Lambda distribution. Given i.i.d. random variables $X_1, \ldots, X_n$ from $F_{\lambda_0}$ and a confidence band $(\ell_{n,\alpha,i},u_{n,\alpha,i})$ for the CDF, the confidence set is
\begin{equation}\label{eq:ci-tukey-lambda}
\begin{split}
\widetilde{\mathrm{CI}}_{n,\alpha} &= \{ \lambda: Q(\ell_{n,\alpha,i}, \lambda) \le X_{(i)} \le Q(u_{n,\alpha,i},\lambda) \hspace{0.2cm} \forall i \in [n] \}\\
&= \bigcap_{i=1}^n \widetilde{\mathrm{CI}}_{n,\alpha,i},
\end{split}
\end{equation}
where
\begin{equation}\label{eq:i-th-order-interval}
\widetilde{\mathrm{CI}}_{n,\alpha,i} := \{ \lambda: Q(\ell_{n,\alpha,i}, \lambda) \le X_{(i)} \le Q(u_{n,\alpha,i},\lambda)\}.
\end{equation}
Note that while it is apriori unknown whether $\widetilde{\mathrm{CI}}_{n,\alpha}$ or $\widetilde{\mathrm{CI}}_{n,\alpha,i}$ are intervals, we shall prove this fact in the following.} \par \vspace{2mm}

To further understand the structure of this confidence set, we analyze how these inequalities behave as functions of $\lambda$. The key insight is in understanding the monotonicity properties of the quantile function $\lambda\mapsto Q(p,\lambda)$. The quantile function $Q(p,\lambda)$ exhibits different monotonicity behavior depending on the probability level:
\begin{itemize}
\item For $p \leq 1/2$: $Q(p,\lambda)$ is increasing in $\lambda$ and non-positive.
\item For $p > 1/2$: $Q(p,\lambda)$ is decreasing in $\lambda$ and non-negative.
\end{itemize}
Given these monotonicity properties, we can determine a closed-form expression for $\widetilde{\mathrm{CI}}_{n,\alpha,i}$. The complete characterization is given in Table~\ref{tab:solutions-tukey-lambda}. Thus, for each inequality corresponding to a particular order statistic, we get a confidence interval $\widetilde{\mathrm{CI}}_{n,\alpha,i}$ of the following form.
\begin{align}
    \widetilde{\mathrm{CI}}_{n,\alpha,i}
    & = \begin{cases}
[L^u_i, U^\ell_i] & \text{if } \ell_{n,\alpha,i}, u_{n,\alpha,i} \leq 1/2, X_{(i)} < 0 \\
(-\infty, U^\ell_i] & \text{if } \ell_{n,\alpha,i} \leq 1/2 \leq u_{n,\alpha,i}, X_{(i)} < 0 \\
\mathbb{R} & \text{if } \ell_{n,\alpha,i} \leq 1/2 \leq u_{n,\alpha,i}, X_{(i)} = 0 \\
(-\infty, U^u_i] & \text{if }\ell_{n,\alpha,i} \leq 1/2 \leq u_{n,\alpha,i}, X_{(i)} > 0 \\
[L^\ell_i, U^u_i] & \text{if } \ell_{n,\alpha,i} ,u_{n,\alpha,i} \geq 1/2, X_{(i)} > 0 \\
\emptyset & \text{otherwise.}
\end{cases}\label{eq:tl-CI-per-quantile}
\end{align}
Here $L^u_i, U^u_i, L^\ell_i$ and $U^\ell_i$ are all functions of $X_{(i)}, \ell_{n,\alpha,i}$ and $u_{n,\alpha,i}$.
Therefore, each inequality corresponding to every order statistic contributes to the final confidence interval. An important question to consider: Which inequalities actually determine the final confidence set?
For example, close to the median, we observe an interesting phenomenon.
\begin{remark}\label{rmk:near-median-ineq}
    If $\ell_{n,\alpha,i} \leq 1/2 \leq u_{n,\alpha,i}$, $\widetilde{\mathrm{CI}}_{n,\alpha,i}$ in Eq~\eqref{eq:i-th-order-interval} is unbounded from below. 
\end{remark}
The above remark reveals that order statistic inequalities near the median $(\ell_{n,\alpha,i} \leq 1/2 \leq u_{n,\alpha,i})$ fail to provide nontrivial lower bounds on $\lambda_0$, yielding unbounded or semi-bounded intervals. This suggests that such inequalities may be less informative for inference compared to those concentrated on one side of the median. The informativeness of individual inequalities can be effectively assessed by visualizing $\widetilde{\mathrm{CI}}_{n,\alpha,i}$ separately for each inequality constraint. Fig.~\ref{fig:ineq-contri-tl} illustrates this for $\lambda_0 \in \{-2,0,2\}$ with sample size $n=30$. The figure shows clearly that inequalities corresponding to extreme order statistics (near the tails) are the ones that dictate the final confidence interval.

\begin{figure}[h!]
    \centering
    \includegraphics[width= \linewidth]{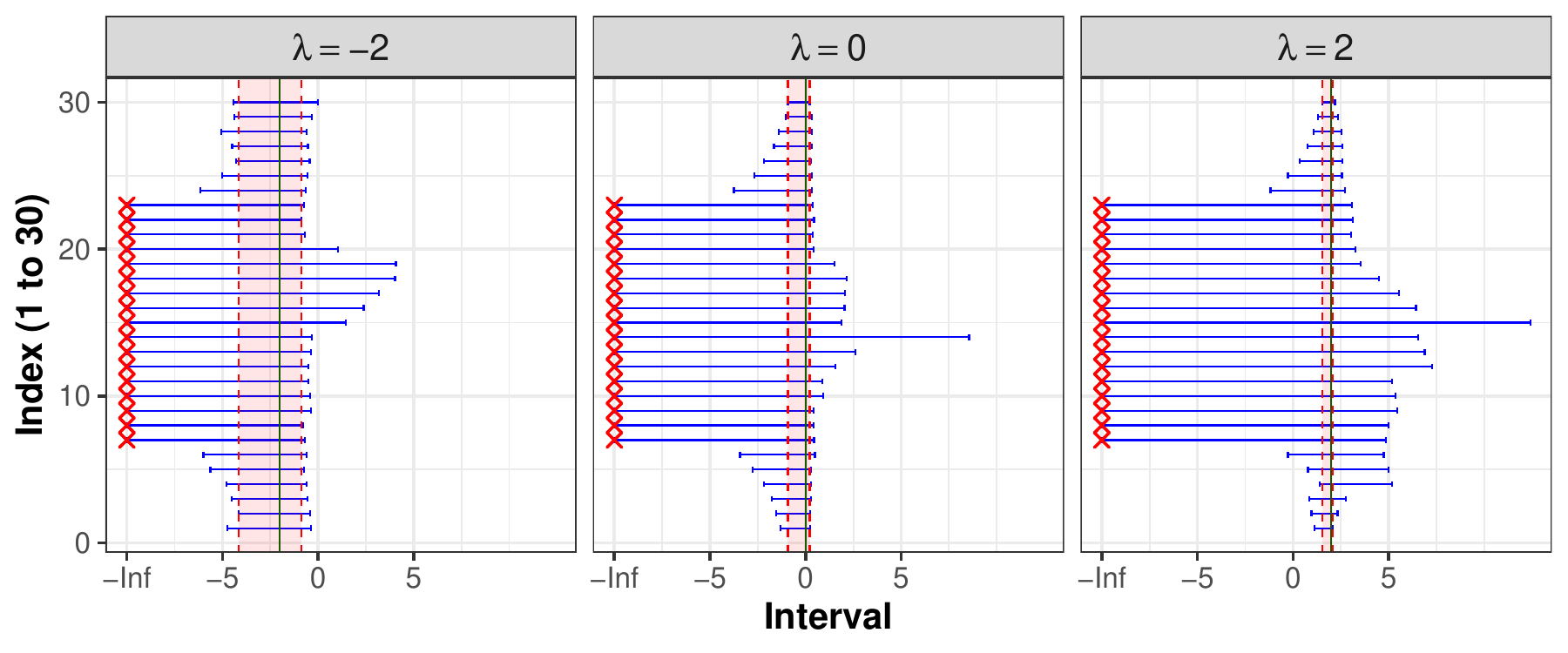}
    \caption{Comparing $\widetilde{\mathrm{CI}}_{n,\alpha,i}$ vs. $i \in [n]$ for $\lambda_0 \in \{-2,0,2\}$ and $n = 30$. A red cross represents $-\infty$. The green line is the true $\lambda_0$ and the red band represents the final confidence interval. We use the DW confidence band in Eq~\eqref{eq:DW-conf-band}.}
    \label{fig:ineq-contri-tl}
\end{figure}

A potential explanation for Remark~\ref{rmk:near-median-ineq} lies in the distribution-free nature of the confidence bands. Since these bands are distribution-free, each inequality around the median is constructed without leveraging the known symmetry of the distribution. This suggests that a data transformation that takes into account the symmetric structure might improve efficiency. For the Lambda distribution, which is symmetric around zero, we can exploit this symmetry through the transformation $x \mapsto |x|$. From a sufficiency perspective, the absolute values $\{|X_i|\}_{i=1}^n$ form a sufficient statistic for $\lambda$ while making a coarser partition of the sample space than the original observations $\{X_i\}_{i=1}^n$. This transformation effectively combines information from both sides of zero, potentially yielding more precise inference for symmetric distributions. Furthermore, we can still find a closed form expression for the quantile function of $|X|$, which we denote by $\widetilde{Q}$ and is defined as:
\begin{align}\label{eq:tukey-lambda-|X|}
   \widetilde{Q}\left( p,\lambda \right) ={Q}\left( \frac{1+p}{2},\lambda \right)= {\begin{cases}{\tfrac{1}{\lambda}\left( \left(\frac{1+p}{2}\right)^{\lambda }-\left(\frac{1-p}{2}\right)^{\lambda }\right)} ,&\ {\mbox{ if }}\ \lambda \neq 0~,\\{}\\\ln \left({\frac {1+p}{1-p }}\right)~,&\ {\mbox{ if }}\ \lambda =0~.\end{cases}}
\end{align}
Mimicking Eq~\eqref{eq:ci-tukey-lambda}, we get the following confidence interval for $\lambda_0$:
\begin{equation}\label{eq:ci-tukey-lambda-|X|}
\wh{\mathrm{CI}}_{n,\alpha} =     \{ \lambda: \widetilde{Q}(\ell_{n,\alpha,i}, \lambda) \le |X|_{(i)} \le \widetilde{Q}(u_{n,\alpha,i},\lambda) \hspace{0.2cm} \forall i \in [n] \}.
\end{equation}

Similarly to Fig.~\ref{fig:ineq-contri-tl}, we plot $\widehat{\mathrm{CI}}_{n,\alpha,i}$ with $\lambda \in \{-2,0,2\}$ and sample size $n=30$ in Fig.~\ref{fig:ineq-|X|-contri-tl}. In particular, none of the intervals is unbounded in this case.

\begin{figure}[h!]
    \centering
    \includegraphics[width= \linewidth]{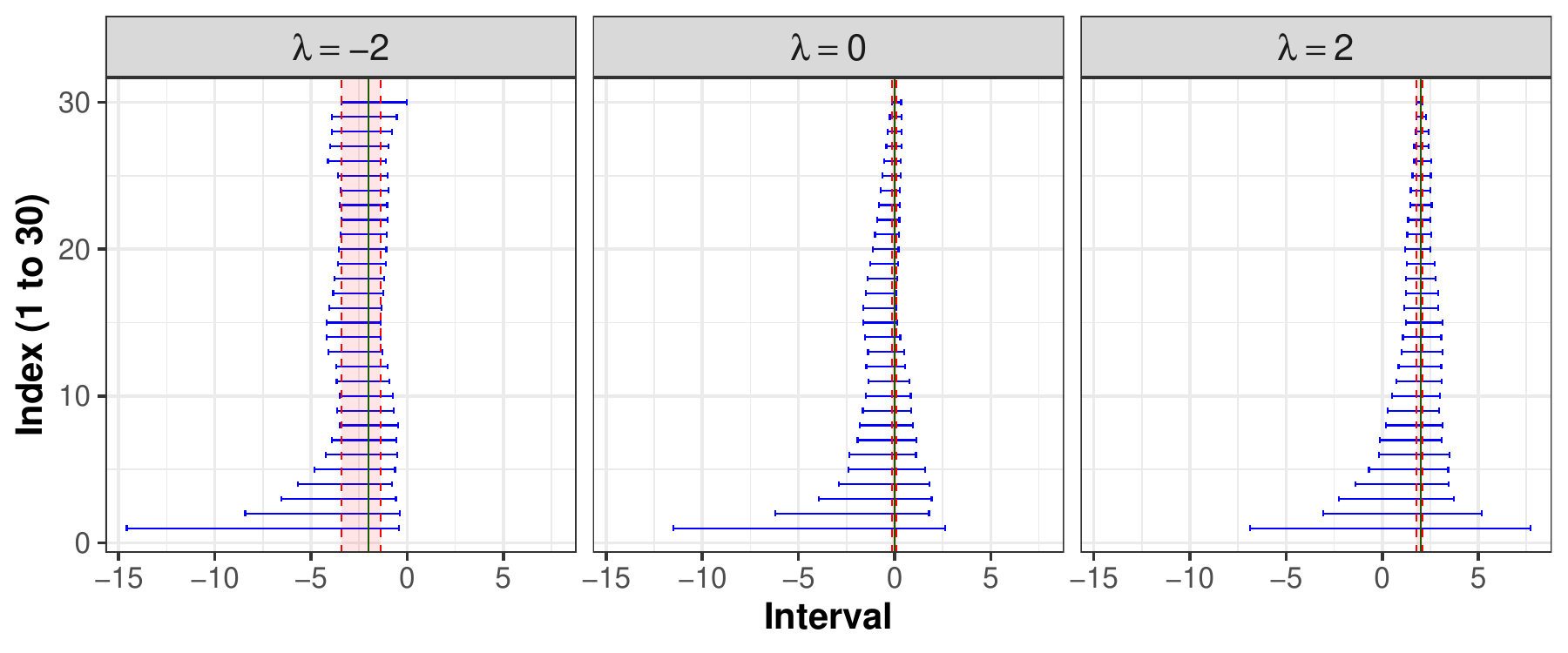}
    \caption{Comparing $\wh{\mathrm{CI}}_{n,\alpha,i}$ vs. $i \in [n]$ for $\lambda \in \{-2,0,2\}$ and $n = 30$. The green line is the true $\lambda$ and the red band represents the final confidence interval. We use the DW confidence band in Eq~\eqref{eq:DW-conf-band}.}
    \label{fig:ineq-|X|-contri-tl}
\end{figure}

As we argued earlier via the sufficiency argument, we also would expect this transformation to lead to a more powerful inference procedure (in other words, to produce narrow confidence intervals). We compare the average width and average coverage of the confidence interval of the original data and the transformed data, across $\lambda_0$ and sample size $n$ in Fig.~\ref{fig:compare-|X|-vs-X}. We see a clear improvement across the board using the transformation, which empirically confirms our theoretical intuition. 

\begin{figure}[h]
    \centering
    \includegraphics[width= 0.9\linewidth]{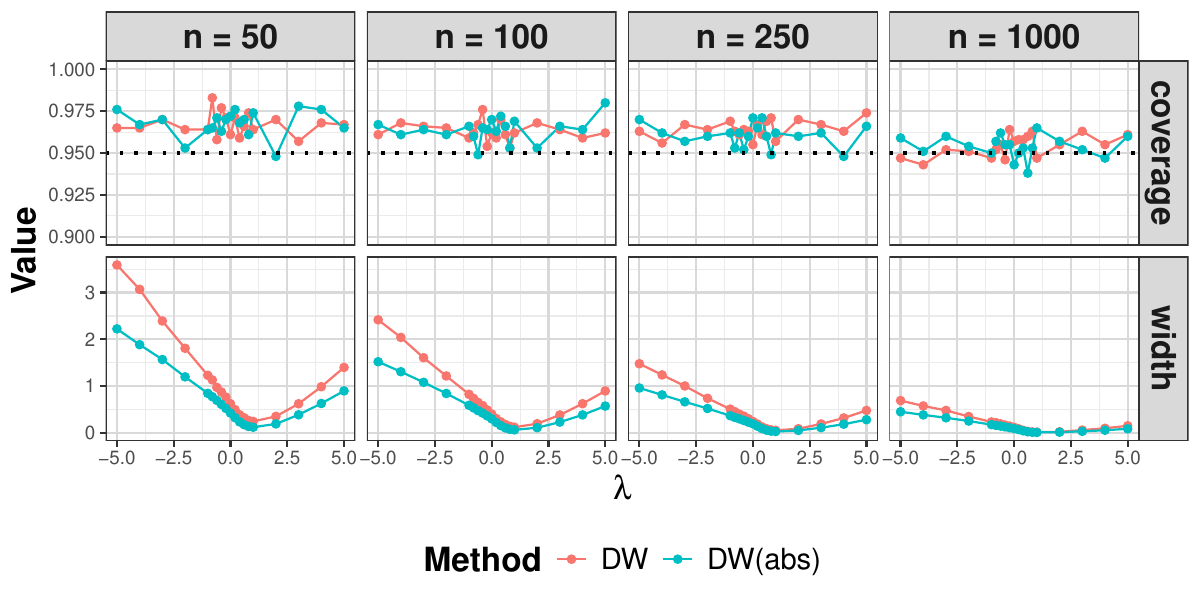}
    \caption{Comparing confidence interval generated by $\{X_{i}\}_{i \in [n]}$ vs. $\{|X|_{i}\}_{i \in [n]}$: across $\lambda_0$ and growing sample sizes. We choose the DW confidence band and coverage level of $95\%$. The number of replications is 500.}
    \label{fig:compare-|X|-vs-X}
\end{figure}
Until now, we have examined the empirical behavior of the constructed confidence intervals, observing that their width decreases to zero as the sample size $n$ increases. The following theorem, together with the subsequent remark, provides a theoretical justification for this phenomenon by showing that each individual constraint concentrates around $\lambda_0$ as $n$ grows.
{\begin{theorem} 
\label{thm:subset_CI}
Let $Q(p,\lambda), \widetilde{Q}(p,\lambda)$ be as defined in Eq~\eqref{eq:tukey-lambda}, \& Eq~\eqref{eq:tukey-lambda-|X|} for $0 \le p \le 1$. Suppose $X_1,X_2,\cdots,X_n$ are i.i.d. samples from Tukey Lambda distribution with parameter $\lambda_0$, i.e. have quantile function $Q(\cdot,\lambda_0)$.
Let $|X|_{(i)}$ be the $i^{th}$ order statistic of the absolute values of the observed data, so that $|X|_{(i)} = \widetilde Q(U_{(i)}, \lambda_0) \ \forall i \in [n]$ for uniform order statistics $U_{(i)}, 1\le i\le n$. Then, for each $1 \le i \le n$,
\begin{align*}
    & \{\lambda: \widetilde{Q}(\ell_{n,\alpha,i}, \lambda) \le |X|_{(i)} \le \widetilde{Q}(u_{n,\alpha,i},\lambda) \}  \\ & \subseteq \left[ \lambda_0 + \frac{\widetilde{Q}(U_{(i)},\lambda_0)-\widetilde{Q}(\ell_{n,\alpha,i},\lambda_0)}{\widetilde{Q}'(\ell_{n,\alpha,i},\lambda_0)}, \lambda_0 - \frac{\widetilde{Q}(u_{n,\alpha,i},\lambda_0)-\widetilde{Q}(U_{(i)},\lambda_0)}{\ln (p_{n,\alpha,i})\widetilde{Q}(U_{(i)}, \lambda_0)} \right]
\end{align*}
where $p_{n,\alpha,i} = (1+ u_{n,\alpha,i})/{2}$ and $\widetilde{Q}'(p,\lambda_0)$ is the partial derivative of $\widetilde{Q}(p,\lambda)$ wrt $\lambda$ at $\lambda_0$ for fixed $p$. 
\end{theorem}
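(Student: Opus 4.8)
The plan is to substitute the probability-integral-transform identity $|X|_{(i)} = \widetilde{Q}(U_{(i)},\lambda_0)$ into the two defining inequalities, so that the set in question becomes
\[
\{\lambda:\ \widetilde{Q}(\ell_{n,\alpha,i},\lambda)\le \widetilde{Q}(U_{(i)},\lambda_0)\le \widetilde{Q}(u_{n,\alpha,i},\lambda)\}.
\]
Writing $c:=\widetilde{Q}(U_{(i)},\lambda_0)=|X|_{(i)}$, the first step is to record that $\lambda\mapsto\widetilde{Q}(p,\lambda)$ is (strictly) decreasing for each fixed $p\in(0,1]$; this is the $|X|$-analogue of the stated monotonicity, since $\widetilde{Q}(p,\lambda)=Q(\tfrac{1+p}{2},\lambda)$ and $\tfrac{1+p}{2}\ge\tfrac12$. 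Consequently the left inequality is equivalent to a half-line $\lambda\ge\lambda_1^{\ast}$ and the right inequality to a half-line $\lambda\le\lambda_2^{\ast}$, where $\lambda_1^{\ast},\lambda_2^{\ast}$ solve $\widetilde{Q}(\ell_{n,\alpha,i},\lambda_1^{\ast})=c$ and $\widetilde{Q}(u_{n,\alpha,i},\lambda_2^{\ast})=c$. The set is therefore the interval $[\lambda_1^{\ast},\lambda_2^{\ast}]$ (empty if $\lambda_1^{\ast}>\lambda_2^{\ast}$, in which case the claimed inclusion is trivial), and it suffices to bound $\lambda_1^{\ast}$ below by the claimed left endpoint and $\lambda_2^{\ast}$ above by the claimed right endpoint.

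For the left endpoint I would use \emph{convexity} of $\lambda\mapsto\widetilde{Q}(\ell_{n,\alpha,i},\lambda)$: the tangent at $\lambda_0$ lies below the graph, so any $\lambda$ with $\widetilde{Q}(\ell_{n,\alpha,i},\lambda)\le c$ also satisfies $\widetilde{Q}(\ell_{n,\alpha,i},\lambda_0)+\widetilde{Q}'(\ell_{n,\alpha,i},\lambda_0)(\lambda-\lambda_0)\le c$. Dividing through by the negative number $\widetilde{Q}'(\ell_{n,\alpha,i},\lambda_0)$ reverses the inequality and yields exactly $\lambda\ge\lambda_0+\frac{c-\widetilde{Q}(\ell_{n,\alpha,i},\lambda_0)}{\widetilde{Q}'(\ell_{n,\alpha,i},\lambda_0)}$, which is the claimed left endpoint.

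For the right endpoint the tangent points the wrong way, so instead I would combine the mean value theorem with a pointwise bound on the logarithmic derivative. Applying the MVT to $\lambda\mapsto\widetilde{Q}(u_{n,\alpha,i},\lambda)$ on $[\lambda_0,\lambda_2^{\ast}]$ gives some $\xi\in(\lambda_0,\lambda_2^{\ast})$ with $\widetilde{Q}(u_{n,\alpha,i},\lambda_0)-c=-\widetilde{Q}'(u_{n,\alpha,i},\xi)(\lambda_2^{\ast}-\lambda_0)$. The key lemma I would then prove is
\[
\frac{\partial_\lambda \widetilde{Q}(p,\lambda)}{\widetilde{Q}(p,\lambda)}\ \le\ \ln\tfrac{1+p}{2}\qquad\text{for all }\lambda,
\]
equivalently $-\widetilde{Q}'(p,\lambda)\ge -\ln(\tfrac{1+p}{2})\,\widetilde{Q}(p,\lambda)$. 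Since $\xi<\lambda_2^{\ast}$ and $\widetilde{Q}(u_{n,\alpha,i},\cdot)$ is decreasing, $\widetilde{Q}(u_{n,\alpha,i},\xi)\ge c$, whence $-\widetilde{Q}'(u_{n,\alpha,i},\xi)\ge -\ln(p_{n,\alpha,i})\,c$; substituting into the MVT identity gives $\lambda_2^{\ast}-\lambda_0\le \frac{\widetilde{Q}(u_{n,\alpha,i},\lambda_0)-c}{-\ln(p_{n,\alpha,i})\,c}$, i.e.\ the claimed right endpoint.

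The main obstacle is establishing the two calculus facts about $\widetilde{Q}(p,\lambda)=\lambda^{-1}(a^\lambda-b^\lambda)$, with $a=\tfrac{1+p}{2}$ and $b=\tfrac{1-p}{2}$: convexity in $\lambda$, and the log-derivative bound. Both require care across the sign of $\lambda$ and at the removable singularity $\lambda=0$ (where $\widetilde{Q}(p,0)=\ln\frac{1+p}{1-p}$). For the log-derivative bound, differentiating gives $\partial_\lambda\widetilde{Q}/\widetilde{Q}=\frac{a^\lambda\ln a-b^\lambda\ln b}{a^\lambda-b^\lambda}-\frac1\lambda$, and after clearing denominators (watching the signs of $\lambda$ and of $a^\lambda-b^\lambda$) the claim reduces to the elementary inequality $\lambda\,b^\lambda\ln(a/b)\le a^\lambda-b^\lambda$ for $\lambda>0$ (reversed for $\lambda<0$), which I would verify by a series comparison: both sides vanish to first order at $\lambda=0$ and the difference has leading term $\tfrac{\lambda^2}{2}(\ln a-\ln b)^2\ge 0$. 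Convexity I would likewise reduce to a sign analysis of $\partial_\lambda^2\widetilde{Q}$, again splitting on the sign of $\lambda$. A secondary bookkeeping point is the orientation of the MVT interval: the computation above takes $\lambda_0\le\lambda_2^{\ast}$, which holds precisely when $U_{(i)}\le u_{n,\alpha,i}$; the complementary sub-cases either render the defining set empty or make one constraint vacuous and are dispatched directly.
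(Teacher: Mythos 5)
Your proposal is correct and takes essentially the same route as the paper's proof: the two facts you rely on are exactly the paper's preparatory lemma (monotonicity and convexity of $\lambda\mapsto\widetilde{Q}(p,\lambda)$), your left endpoint comes from the identical tangent-at-$\lambda_0$ convexity argument, and your key log-derivative bound $\partial_\lambda\widetilde{Q}(p,\lambda)/\widetilde{Q}(p,\lambda)\le\ln\frac{1+p}{2}$ is precisely the ``Claim'' established inside the paper's proof. The only structural difference --- you reach the right endpoint via the mean value theorem at an intermediate $\xi$ (using $\widetilde{Q}(u_{n,\alpha,i},\xi)\ge c$), whereas the paper uses the convexity tangent at $\lambda_2$ (where $\widetilde{Q}(u_{n,\alpha,i},\lambda_2)=c$ exactly) --- is immaterial, since both arguments hinge on applying the same log-derivative bound at an unknown parameter value.

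Two small corrections. First, in your reduction of the log-derivative lemma, the inequality $\lambda b^\lambda\ln(a/b)\le a^\lambda-b^\lambda$ is \emph{not} reversed for $\lambda<0$: clearing the denominator $a^\lambda-b^\lambda<0$ flips the inequality once, and multiplying through by $\lambda<0$ flips it back, so the same inequality (equivalently $e^t\ge 1+t$ with $t=\lambda\ln(a/b)$) is what is needed for both signs --- and it is what your leading-term computation actually proves, so the fix is only to the parenthetical. Second, your closing remark that the sub-case $U_{(i)}>u_{n,\alpha,i}$ is ``dispatched directly'' is optimistic: in that case the defining set need not be empty, neither constraint is vacuous, and the claimed right endpoint can genuinely fail; however, the paper's proof operates under exactly the same restriction (its step asserting that all $\lambda\le\lambda_0$ automatically satisfy the upper constraint silently uses $U_{(i)}\le u_{n,\alpha,i}$, and its lower-constraint discussion likewise uses $\ell_{n,\alpha,i}\le U_{(i)}$), so your argument covers the same cases as the paper's own.
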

\begin{proof}
    The proof is in \Cref{proof:subset_CI}.
\end{proof}}



\begin{remark}
    The length of the envelope confidence interval goes to 0 for each $i$ since so happens for the set in the right. Furthermore, for $\lambda_0 = 0$, the rate of this shrinkage is ${1}/{\sqrt{n}}$. 
\end{remark}


\subsection{Generalized Lambda Distribution}\label{sec:gld}
While the Tukey Lambda distribution offers considerable flexibility in modeling various tail behaviors, it has two major limitations: it is constrained to have zero median and a symmetric shape. To address these restrictions, the generalized Lambda distribution (GLD) was developed, extending the framework to accommodate arbitrary location parameters and asymmetric distributions. This allows practitioners to use one parametric family of distributions with a versatile range of distributional properties. The parameterization most commonly used is due to \cite{freimer1988study}, which is denoted by $\mathrm{GLD}(\lambda_1,\lambda_2,\lambda_3,\lambda_4)$. The distribution is defined in terms of its quantile
function, which is
\begin{align*}
    Q(u) = \lambda_1 +  \frac{1}{\lambda_2}\left[\frac{u^{\lambda_3}-1}{\lambda_3}-\frac{(1-u)^{\lambda_4}-1}{\lambda_4}\right],
\end{align*}
where $\lambda_1$ is a location parameter, $\lambda_2 > 0$ is an inverse scale parameter, and $\lambda_3, \lambda_4$ are shape parameters. Note that $\lambda_3 = \lambda_4$ and $\lambda_1 = 0, \lambda_2 = 1$ yield the Tukey Lambda distribution. The parameters $\lambda_3, \lambda_4$ essentially dictate the left and right tails of the distribution.

However, the parameters exhibit complex interdependencies for $Q(\cdot)$ to be a valid quantile function. \cite{chalabi2012flexible} proposed another parameterization by transforming the parameters of the earlier parameterization. First, note that the usual quantile function can be written in the following form:
\[
Q(u) = \lambda_1 + \frac{1}{\lambda_2} S(u | \lambda_3, \lambda_4),
\]
where
\[
S(u | \lambda_3, \lambda_4) =
\begin{cases}
\ln(u) - \ln(1 - u) & \text{if } \lambda_3 = 0, \ \lambda_4 = 0, \\
\ln(u) - \frac{1}{\lambda_4} \left[(1 - u)^{\lambda_4} - 1 \right] & \text{if } \lambda_3 = 0, \ \lambda_4 \ne 0, \\
\frac{1}{\lambda_3} (u^{\lambda_3} - 1) - \ln(1 - u) & \text{if } \lambda_3 \ne 0, \ \lambda_4 = 0, \\
\frac{1}{\lambda_3} (u^{\lambda_3} - 1) - \frac{1}{\lambda_4} \left[(1 - u)^{\lambda_4} - 1 \right] & \text{otherwise}.
\end{cases}
\]
In the limiting case $u = 0$:
\[
S(0 | \lambda_3, \lambda_4) =
\begin{cases}
-\dfrac{1}{\lambda_3} & \text{if } \lambda_3 > 0, \\
-\infty & \text{otherwise}.
\end{cases}
\]
In the limiting case $u = 1$:
\[
S(1 | \lambda_3, \lambda_4) =
\begin{cases}
\dfrac{1}{\lambda_4} & \text{if } \lambda_4 > 0, \\
\infty & \text{otherwise}.
\end{cases}
\]

The median, $\tilde{\mu}$, and the interquartile range, $\tilde{\sigma}$, can now be used to represent the location and scale parameters. These are defined as
\begin{align*}
\tilde{\mu} = Q(1/2), \  
\tilde{\sigma} = Q(3/4) - Q(1/4).
\end{align*}
The parameters $\lambda_1$ and $\lambda_2$ can therefore be expressed in terms of the median and interquartile range as
\[
\lambda_1 = \tilde{\mu} - \frac{1}{\lambda_2} S\left( \frac{1}{2} \middle| \lambda_3, \lambda_4 \right),
\]
\[
\lambda_2 = \frac{1}{\tilde{\sigma}} \left[ S\left( \frac{3}{4} \middle| \lambda_3, \lambda_4 \right) - S\left( \frac{1}{4} \middle| \lambda_3, \lambda_4 \right) \right].
\]

The unbounded parameters $\lambda_3, \lambda_4$ are transformed to a bounded scale using the asymmetry parameter, $\chi$, and the steepness parameter, $\xi$, defined as
\begin{align*}
\chi &= \frac{\lambda_3 - \lambda_4}{\sqrt{1 + (\lambda_3 - \lambda_4)^2}} \in (-1, 1) \\
\xi &= \frac{1}{2} - \frac{\lambda_3 + \lambda_4}{2\sqrt{1 + (\lambda_3 + \lambda_4)^2}} \in (0, 1).
\end{align*}

The $S$ function can now be formulated in terms of the shape parameters \(\chi\) and \(\xi\). Given the definitions of \(\tilde{\mu}, \tilde{\sigma}, \chi, \xi\), and \(S\), the quantile function of the GLD becomes
\begin{equation}\label{eq:gld-csw}
Q_{\text{CSW}}(u|\tilde{\mu}, \tilde{\sigma}, \chi, \xi) = \tilde{\mu} + \tilde{\sigma} \frac{S(u|\chi, \xi) - S\left(\frac{1}{2}|\chi, \xi\right)}{S\left(\frac{3}{4}|\chi, \xi\right) - S\left(\frac{1}{4}|\chi, \xi\right)}.
\end{equation}

The CSW parameterization of \cite{chalabi2012flexible} rewrites the GLD in terms of location, scale, and shape components, making the model easier to interpret and work with. The parameters $\tilde{\mu}$ and $\tilde{\sigma}$ correspond to the median and interquartile range, giving distribution-free measures of center and spread. The shape parameters $\chi \in (-1,1)$ and $\xi \in (0,1)$ are bounded transformations of $(\lambda_3,\lambda_4)$ and control asymmetry and tail behavior. This reparameterization is useful for inference: location and scale depend only on specific quantiles and can be estimated by inverting the CDF band, while the shape parameters affect the ratios of quantile differences and can be handled through normalized shape statistics. As a result, the CSW parameterization separates the roles of the parameters and avoids the identifiability and interaction issues present in the original form. In what follows, we build confidence sets for $(\tilde{\mu},\tilde{\sigma},\chi,\xi)$ using this structure. While the general framework from Section~\ref{sec:inf-framework} could be used to form a four-dimensional confidence region, such a region would be hard to interpret, so we instead take advantage of the separation between location/scale and shape.

\paragraph{Location parameter $(\tilde{\mu})$}
Given the parameterization in Eq~\eqref{eq:gld-csw} by \cite{chalabi2012flexible}, we have the following relationships:
\begin{align*}
\tilde{\mu} = Q(1/2), \quad
\tilde{\sigma} = Q(3/4) - Q(1/4).
\end{align*}
Following the approach for Tukey lambda distribution, we utilize the order statistic inequalities:
\[F^{-1}(\ell_{n,\alpha,i}) \le X_{(i)} \le F^{-1}(u_{n,\alpha,i}) \quad \forall i \in[n].\]
Since $\tilde{\mu} = Q(\frac{1}{2}) = F^{-1}(\frac{1}{2})$, we can invert the CDF band to obtain confidence intervals for any quantile $Q(u)$:
\begin{align}\label{eq:gld-quantile-ci}
 \widehat{\mathrm{CI}}_{Q}(u) = [X_{(a)}, X_{(b)}],
 \end{align}
where $a = \inf \{i : u_{n,\alpha,i} \ge u\}$ and $b = \sup \{i : \ell_{n,\alpha,i} \le u\}$. Substituting $u = 1/2$ yields the confidence interval for $\tilde{\mu}$.

\paragraph{Scale parameter $(\tilde{\sigma})$}
For $\tilde{\sigma}$, we first define a general quantile range functional $\mathrm{QR}(F, u_1,u_2)$ for $u_1 > u_2$:
\[\mathrm{QR}(F, u_1,u_2) = F^{-1}(u_1) - F^{-1}(u_2).\]
Since $L$, $U$, and $F$ are monotonic, we have $U^{-1}(x) \le F^{-1}(x) \le L^{-1}(x)$, which gives us the following inequality for $\mathrm{QR}$:
$$U^{-1}(u_1)-L^{-1}(u_2) \le \mathrm{QR}(F,u_1,u_2) \le L^{-1}(u_1)-U^{-1}(u_2).$$
Using this relationship, we obtain the confidence interval for $\mathrm{QR}$ as:
\begin{align}\label{eq:gld-qr-ci}
    \widehat{\mathrm{CI}}_\mathrm{QR}(u_1,u_2) &:= \left\{\mathrm{QR}(G, u_1,u_2) : \ell_{n,\alpha}(x) \le G(x) \le u_{n,\alpha}(x) \text{ } \forall x \in \mathbb{R}\right\} \\
&= [\max\{X_{(a_1)}-X_{(b_2)}, 0\}, X_{(b_1)}-X_{(a_2)}],
\end{align}

where
\begin{align*}
a_1 &= \inf \{i : u_{n,\alpha,i} \ge u_1\}, \quad b_1 = \sup \{i : \ell_{n,\alpha,i} \le u_1\}, \\
a_2 &= \inf \{i : u_{n,\alpha,i} \ge u_2\}, \quad b_2 = \sup \{i : \ell_{n,\alpha,i} \le u_2\}.
\end{align*}
The maximum with $0$ ensures non-negativity, since $\mathrm{QR}(F, u_1,u_2) \geq 0$ for all $u_1 \geq u_2$. Any CDF in the confidence band will enforce this constraint when $X_{(a_1)}-X_{(b_2)}$ is negative. Substituting $u_1 = 3/4$ and $u_2 = 1/4$ yields the confidence interval for $\tilde{\sigma}$.

\paragraph{Shape parameters $(\chi, \xi)$}
Having constructed confidence intervals for $\tilde{\mu}$ and $\tilde{\sigma}$, we now consider the shape parameters $\chi$ and $\xi$. We first define the rescaled shape statistic:
\begin{align}
\tilde{s}(u_1, u_2 \mid \chi, \xi) &= \frac{Q(u_2) - Q(u_1)}{Q(3/4) - Q(1/4)} = \frac{S(u_2 \mid \chi, \xi) - S(u_1 \mid \chi, \xi)}{S(3/4 \mid \chi, \xi) - S(1/4 \mid \chi, \xi)}.
\end{align}
This statistic represents the ratio of a quantile range to the IQR, effectively normalizing the shape characteristics. 

Let $\widehat{\mathrm{CI}}_\mathrm{QR}(u_1,u_2) = [\widehat{\ell}_\mathrm{QR}(u_1,u_2), \widehat{u}_\mathrm{QR}(u_1,u_2)]$. The confidence interval for the shape statistic is:
\[
\widehat{\mathrm{CI}}_{s}(u_1, u_2) := \left[ \frac{\widehat{\ell}_\mathrm{QR}(u_1,u_2)}{\widehat{u}_\mathrm{QR}(3/4,1/4)}, \frac{\widehat{u}_\mathrm{QR}(u_1,u_2)}{\widehat{\ell}_\mathrm{QR}(3/4,1/4)} \right].
\]
The joint confidence region for the shape parameters $(\chi, \xi)$ is, therefore,
\begin{align}\label{eq:gld-chi-xi-joint-ci}
    \widehat{\mathrm{CI}}_{n,\alpha}(\chi, \xi; \mathcal{U}) := \left\{(\chi, \xi): \tilde{s}(u_1, u_2 \mid \chi, \xi) \in \widehat{\mathrm{CI}}_{s}(u_1, u_2) \text{ } \forall (u_1, u_2)\in \mathcal{U}\right\},
\end{align}
where $\mathcal{U}$ is a collection of pairs with $1 \ge u_1 > u_2 \ge 0$. To compute $\widehat{\mathrm{CI}}_{n,\alpha}(\chi, \xi)$, we construct fine grids for $\chi \in (-1,1)$ and $\xi \in (0,1)$ and identify pairs $(\chi, \xi)$ for which all values of $\tilde{s}(u_1,u_2)$ fall within their corresponding confidence intervals.

\paragraph{Choice of $\mathcal{U}$}

In Eq~\eqref{eq:gld-chi-xi-joint-ci}, one must determine which pairs $(u_1,u_2)$ (i.e. $\mathcal{U}$) to consider. Although this initially appears to involve infinitely many choices, the problem reduces to a finite set since $\widehat{\text{CI}}_s(u_1,u_2)$ only changes when $u_1,u_2 \in \{\ell_{n,\alpha,i}\}_{i \in [n]} \cup \{u_{n,\alpha,i}\}_{i \in [n]}$, i.e., at the confidence band values corresponding to order statistics. This yields at most $\mathcal{O}(n^2)$ pairs to consider.

One tractable approach draws from the scan statistic literature, specifically the Rivera-Walther
confidence set (\cite{rivera2013optimal}), which we use in the current context to provide an optimal choice of pairs.
\begin{align}\label{eq:rw-collection-gld}
\mathcal{U}_{\text{RW}} =  \bigcup_{l=2}^{l_{\max} } \mathcal{U}(l), \ \mbox{where}\ 
\mathcal{U}(l) =  \big\{&(k/n,j/n]: 
 j,k \in \{1 + i\cdot d_l, i \in \N_0\},\\
 &\quad m_l<k-j< 2m_l \big\}.
\end{align}
where $l_{\max} = \lfloor \log_2(n/\log n)\rfloor$, and for $2 \le l\le l_{\max}$,  $m_{l} = n2^{-l}$, and $d_l = \lceil m_l/(6l^{1/2})\rceil$. This collection has a size of $\mathcal{O}(n)$. When used to choose which pairs of order statistics to consider, it has been shown to achieve optimal rates while remaining computationally efficient in feature detection~\citep{rivera2013optimal} and histogram construction \citep{li2020essential}. Therefore, $\mathcal{U}_\mathrm{RW}$ provides a natural choice for $\mathcal{U}$.

 However, a more refined approach is possible. Consider the analogy with the Tukey lambda inference for the scale parameter $\lambda$. Selecting $\mathcal{U}$ parallels choosing which order statistic inequalities are most effective in constructing the final confidence interval for $\lambda$. In that context, extreme statistics proved crucial because tail behavior fundamentally determines distributional shape parameters. We observe a similar phenomenon here.

To better understand this behavior, we examine how the choice of $\mathcal{U}$ affects the confidence region using a single sample of size $n=500$ from GLD$(0, 1, 0, 0.3661)$ (which approximates $N(0,1)$) (see Fig.~\ref{fig:chi_xi_growing}). For this sample, we construct $\mathcal{U}_k$ as follows:
\begin{itemize}
    \item Let $k$ denote the number of grid points in the interval $(0,1)$:
    \[
    \mathrm{seq}_k = \left\{ \left\lfloor \frac{(n-1)i}{k} \right\rfloor / n \ \middle| \ i = 1, 2, \ldots, k \right\},
    \]
    where $n$ is the sample size.
    \item From this sequence, we form all possible combinations of 2-elements $(u_1, u_2)$ with $u_1 > u_2$:
    \begin{align}\label{eq:U_k-cuts-comb}
    \mathcal{U}_k = \left\{ (u_1, u_2) \ \middle| \ u_1, u_2 \in \mathrm{seq}_k,\ u_1 > u_2 \right\}.
    \end{align}
\end{itemize}

The higher values of $k$ produce finer grids $\mathrm{seq}_k$ and thus more combinations of $(u_1,u_2)$. We also consider $\mathcal{U}_{k}^{\text{edge}}$, defined as:
\begin{align}\label{eq:U_k-edges}
    \mathcal{U}_{k}^{\text{edge}} = \left\{(u_1,u_2) \in \mathcal{U}_{k}: u_1 \leq 2/k \text{ or } u_2 \geq 1 - (2/k) \right\}.
\end{align}
This subset of $\mathcal{U}_k$ focuses on points near the edges of $[0,1]^2$, examining the rescaled shape statistic $\tilde{s}$ at the data extremes.

Fig.~\ref{fig:chi_xi_growing} illustrates this behavior: the second row shows $\mathcal{U}_k$ for $k = 3,5,9,17$, the final column displays $\mathcal{U}_{17}^{\text{edge}}$, and the first row presents the corresponding confidence regions for $(\chi,\xi)$. As $k$ increases, the constraint set expands, and the confidence region (in red) shrinks. Moreover, comparing the last two columns reveals that the confidence regions appear identical when restricting $(u_1,u_2)$ to the edge values alone. This suggests that extreme statistics provide the most important constraints in determining the confidence region.

\begin{figure}[h]
    \centering
    \includegraphics[width=\linewidth]{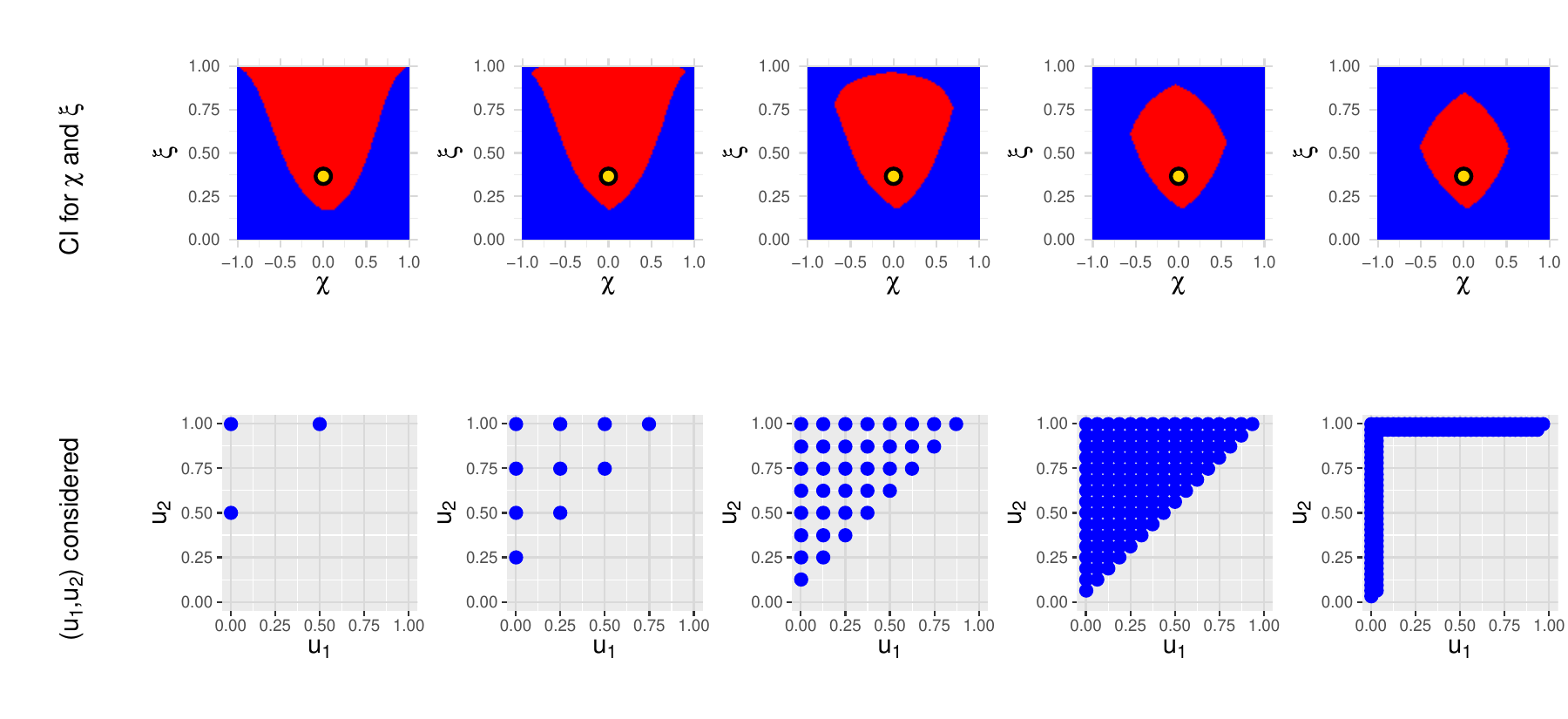}
    \caption{Choice of $\mathcal{U}$ and confidence region for $\chi = 0$ and $\xi = 0.3661$. Each column represents a choice of $\mathcal{U}$ and the corresponding confidence region. We use a coverage level of $95\%$.}
    \label{fig:chi_xi_growing}
\end{figure}

\section{Simulations}
\label{sec:simulations}
{In the previous section, we have developed and described inference procedures for both the Tukey Lambda Distribution family and the Generalized Lambda Distribution (GLD) family. In this section, we do an empirical assessment of the proposed methodologies, by performing simulations under a wide range of parameter settings and sample sizes. The aim here is to demonstrate that our methods maintain coverage validity across different scenarios and to provide a comparison with existing techniques that have been proposed in the literature. In \Cref{subsec:result_tlambda_dwdkw}, we address the question of how to choose a confidence band. In particular, we compare two different confidence bands for our method in Tukey Lambda Distribution : the DKW and DW confidence bands. In \Cref{subsec:result_tlambda_comparison}, we compare our method for Tukey Lambda Distribution with various existing ones. In \Cref{subsec:results_gld}, we examine the proposed method for GLD and compare it with existing approaches. Finally, in \Cref{subsec:data analysis}, we perform a real data analysis, comparing it with boostrap methods}.

\subsection{Results: choice of confidence bands for the Tukey Lambda distribution}
\label{subsec:result_tlambda_dwdkw}

The first consideration when applying our method is the choice of confidence band. We focus on two options: the DKW and DW bands, defined in Eq~\eqref{eq:DKW-conf-band} and Eq~\eqref{eq:DW-conf-band}, respectively. We compare their performance in constructing confidence intervals for $\lambda_0$ in the Tukey Lambda distribution, with the results shown in Fig.~\ref{fig:dwdkw_cov_wid_lam_n}.

Both bands achieve comparable coverage across the parameter space and maintain nominal validity. However, the DW band consistently yields narrower confidence intervals than the DKW band, demonstrating a clear efficiency advantage. Based on this comparison, we used the DW band for the remaining simulation studies.

\begin{figure}[h!]
    \centering
    \includegraphics[width=0.9\linewidth]{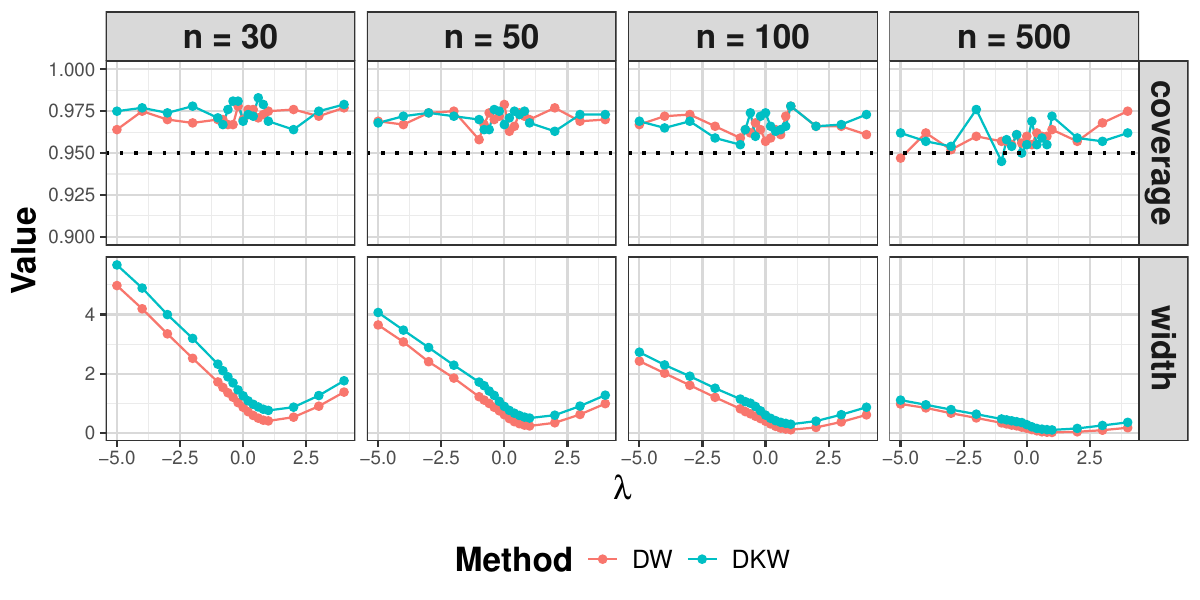}
    \caption{Comparison of DW and DKW confidence bands for the Tukey Lambda distribution across values of $\lambda$ and sample sizes. Coverage level is 95\% with 500 replications.}
    \label{fig:dwdkw_cov_wid_lam_n}
\end{figure}

\subsection{Results: comparison of methods for Tukey Lambda Distribution}
\label{subsec:result_tlambda_comparison}
We now compare our method for Tukey Lambda distribution (using the DW confidence band) to various method-of-moments approaches.
\begin{itemize}
    \item \textit{L-moments:} The L-moments \citep{hosking1990moments} method works only for $\lambda_0 > -1$. However, this restriction cannot be identified from the data, so we apply this method even when $\lambda_0 \le -1$.
    \item \textit{Quantile matching:} This method is valid for all ranges of $\lambda$. We take several quantiles and average the estimate of $\lambda_0$ obtained \citep{su2010fitting}.
\end{itemize} 

Given these point estimators, bootstrap methods \citep{efron1994introduction} offer resampling-based procedures for inference. The non-parametric bootstrap would resample with replacement from the original data $\{X_1, \ldots, X_n\}$ to create bootstrap samples $\{X_1^*, \ldots, X_n^*\}$, from which parameter estimates $\hat{\lambda}^*$ are computed. An alternative approach could be to use the parametric bootstrap, which leverages the quantile function directly. Given parameter estimates $\hat{\lambda}$, bootstrap samples are generated as $X_i^* = Q_{\hat{\lambda}}(U_i)$ where $U_i \sim \text{Uniform}(0,1)$. Both bootstrap approaches lack theoretical justification when the underlying parameter estimators exhibit irregular asymptotic behavior. We compare the results across different parameter values and growing sample sizes. The results are shown in \Cref{fig:method_comparison_covwid_lam_n}. The L-moment method performs poorly for $\lambda_0 \le -1$, as anticipated given the theoretical limitations in this regime. The quantile-based method achieves approximately nominal coverage in most combinations of $\lambda_0$ and $n$, although the parametric bootstrap exhibits undercoverage for small sample sizes.  In finite samples, our method produces narrower confidence intervals for all $\lambda_0 \ge 1$. However, when $\lambda_0 \le 1$, the parametric bootstrap fails to have valid coverage guarantees. Although the nonparametric bootstrap for quantile matching method maintains the coverage, it returns a larger width for small samples. The L-moment method fails for $\lambda_0 \le -1$. Hence, overall our method only provides finite sample coverage along with minimum width across various values of the actual parameter $\lambda_0$. As the sample size increases, all methods converge to similar interval widths. Interestingly, we observe distinctive behavior near $\lambda_0 = 0$. To examine this more closely, we perform an analysis using an increasing number of points in the neighborhood of zero, enabling a more detailed examination of potential irregularities in the competing methods. The results are shown in \Cref{fig:method_comparison_covwid_lam_n_around_zero} As we approach $\lambda_0 = 0$, the coverage of all other methods fluctuates heavily, while our method remains stable. Specifically, the L-moments method and the parametric bootstrap applied to quantile matching fail to give coverage for small samples.. Although the nonparametric bootstrap for quantile matching gives valid coverage, it's width is much higher compared to ours in small samples. This instability results from changes in the asymptotic behavior of the estimators near zero. This instability diminishes as the sample size increases.
\begin{figure}[h!]
    \centering
    \includegraphics[width=0.9\linewidth]{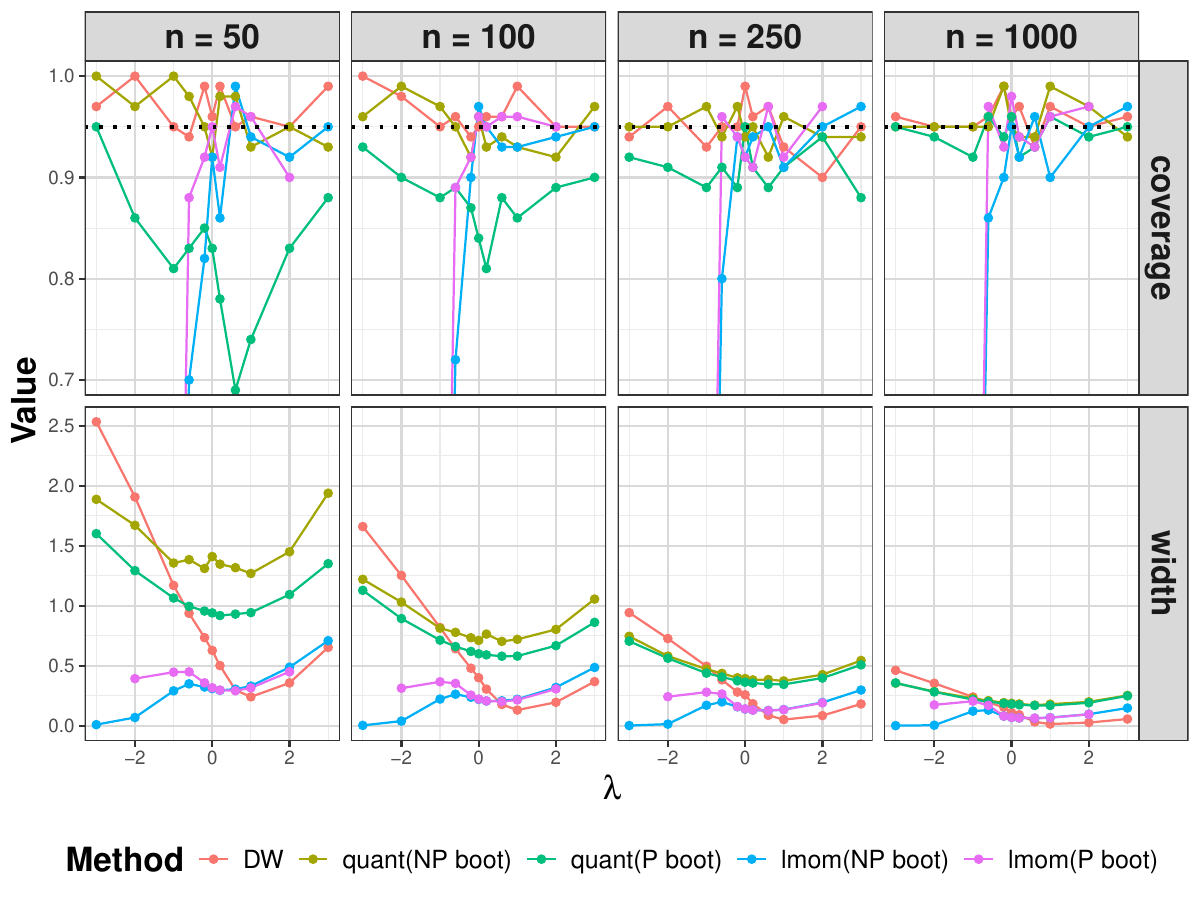}
    \caption{Method Comparison for Tukey Lambda: across $\lambda$ and growing sample sizes. We choose a coverage level of $95\%$. The number of replications is 500.}
    \label{fig:method_comparison_covwid_lam_n}
\end{figure}

\begin{figure}[h!]
    \centering
    \includegraphics[width=0.9\linewidth]{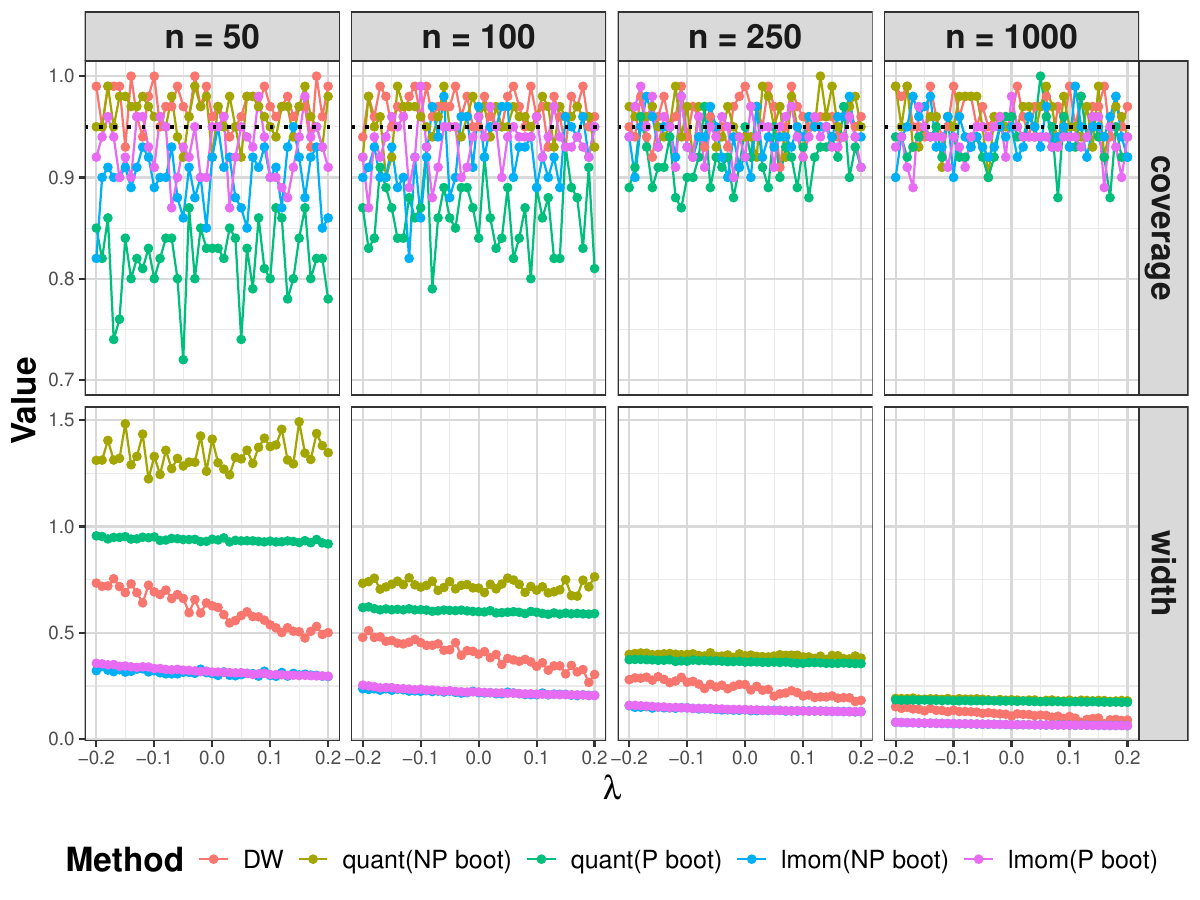}
    \caption{Method Comparison for Tukey Lambda: in a close neighbourhood around $\lambda = 0$ and growing sample sizes. We choose a coverage level of $95\%$. The number of replications is 500.}
    \label{fig:method_comparison_covwid_lam_n_around_zero}
\end{figure}

\subsection{Results: confidence interval and regions for GLD}
\label{subsec:results_gld}

In this subsection, we study our proposed confidence intervals for Generalized Lambda Distribution via simulations. We first study the confidence intervals for the location parameter $\tilde{\mu}$ and scale parameter $\tilde{\sigma}$. 

In \Cref{fig:mu_GLD}, we display the average width and coverage of the proposed confidence interval for $\mu \in \{-1,0,1\}$ as sample size increases. The confidence interval width decreases with sample size while maintaining nominal coverage across all values of $\mu$. Since $\mu$is a location parameter, variations in this parameter correspond to pure location changes, resulting in identical width behavior and consistent coverage probabilities across all $\mu$ values. The widths decrease rapidly with increasing sample size $n$.

In \Cref{fig:sigma_GLD}, we display the average width and coverage of the proposed confidence interval for $\tilde{\sigma} \in \{1/2,1,2\}$ as sample size increases. The confidence interval width decreases with sample size while maintaining nominal coverage across all values of $\tilde{\sigma}$. Since $\tilde{\sigma}$ is a scale parameter, larger values correspond to increased variability in the distribution, resulting in correspondingly larger average interval widths across all sample sizes.

\begin{figure}[!h]
    \centering
    \includegraphics[width=0.49\linewidth]{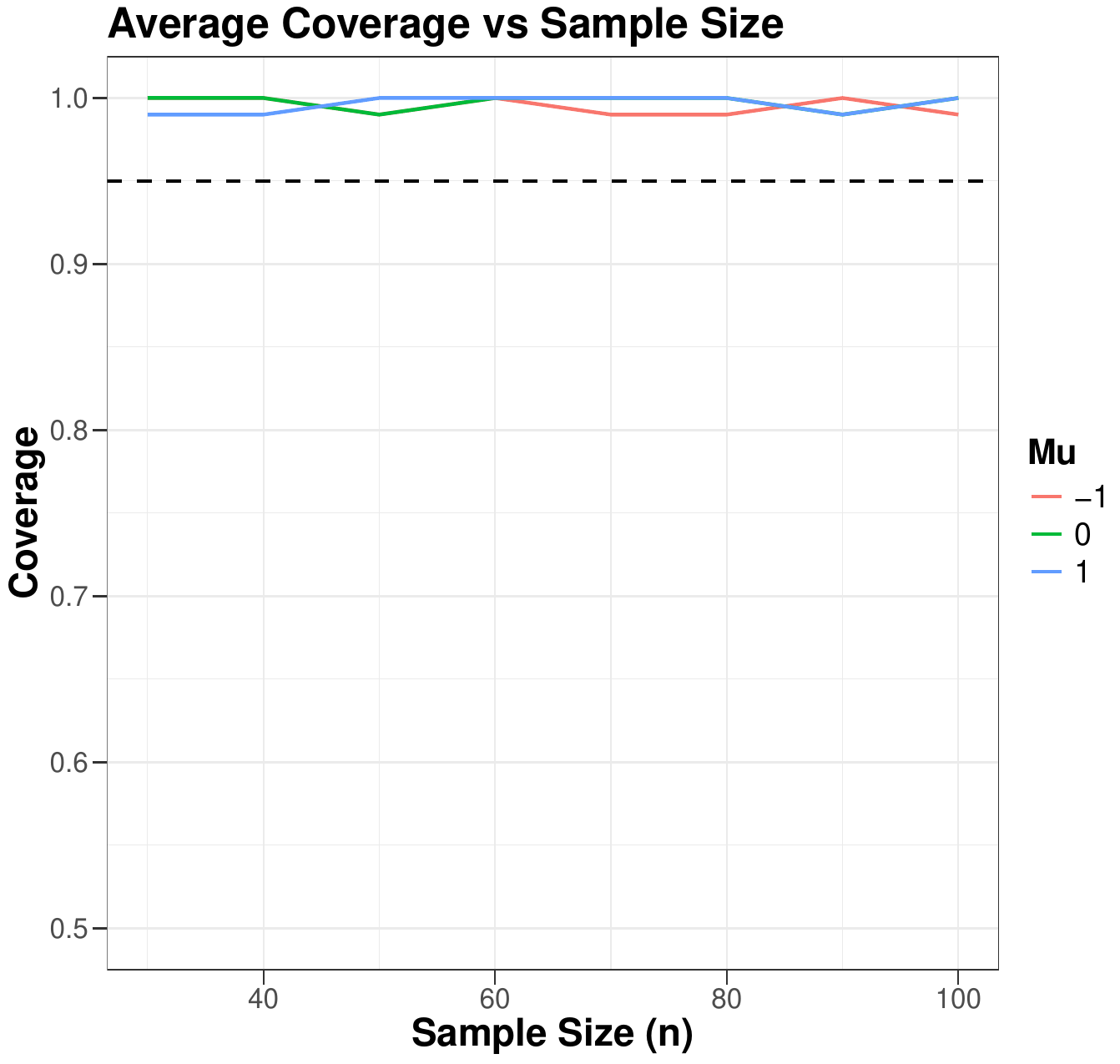}
    \includegraphics[width=0.49\linewidth]{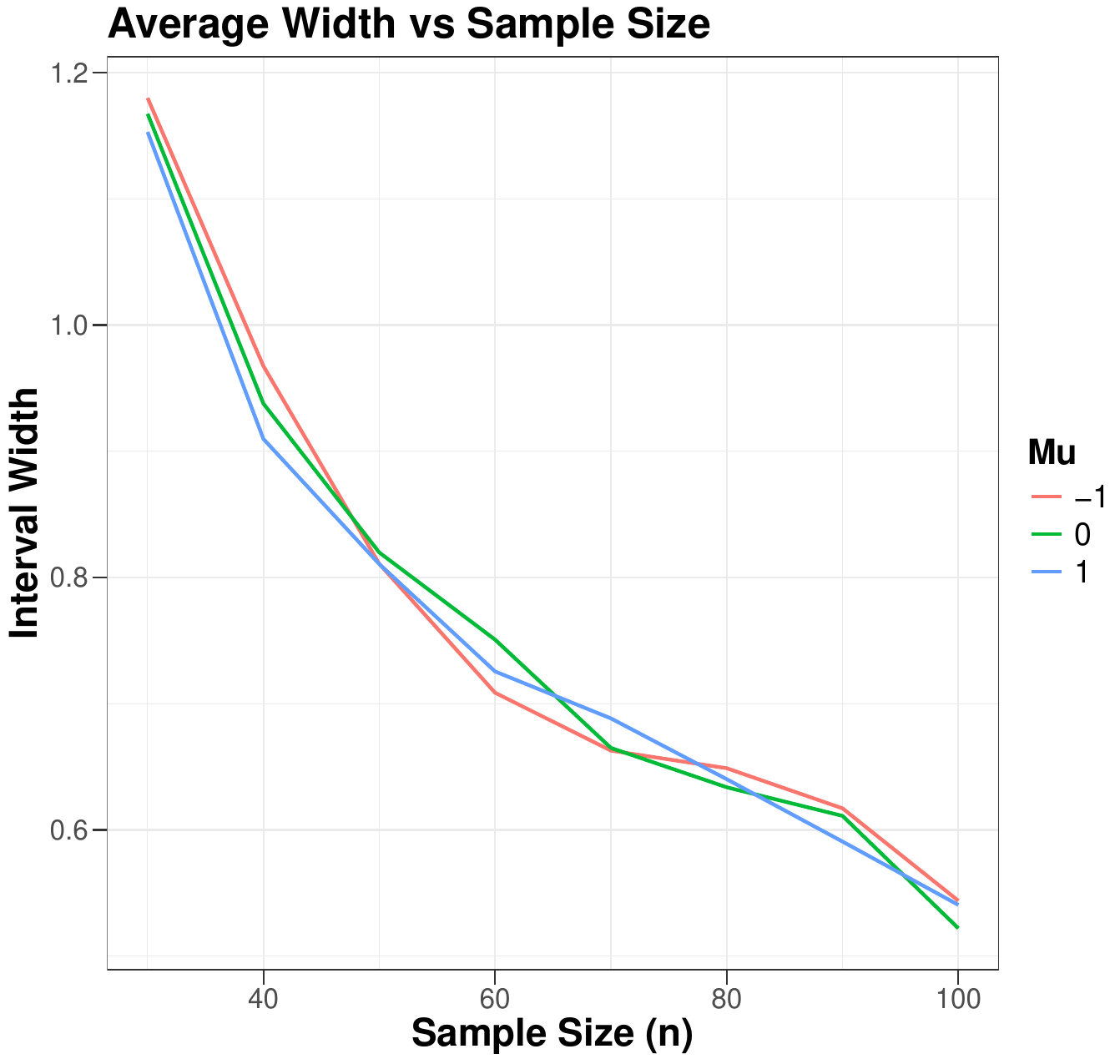}
    \caption{$\tilde{\mu}$ CI width and coverage: for different choices of $\tilde{\mu}$ and growing sample sizes. We choose a coverage level of $95\%$.  The number of replications is 500.}
    \label{fig:mu_GLD}
\end{figure}
\begin{figure}[!h]
    \centering
    \includegraphics[width=0.49\linewidth]{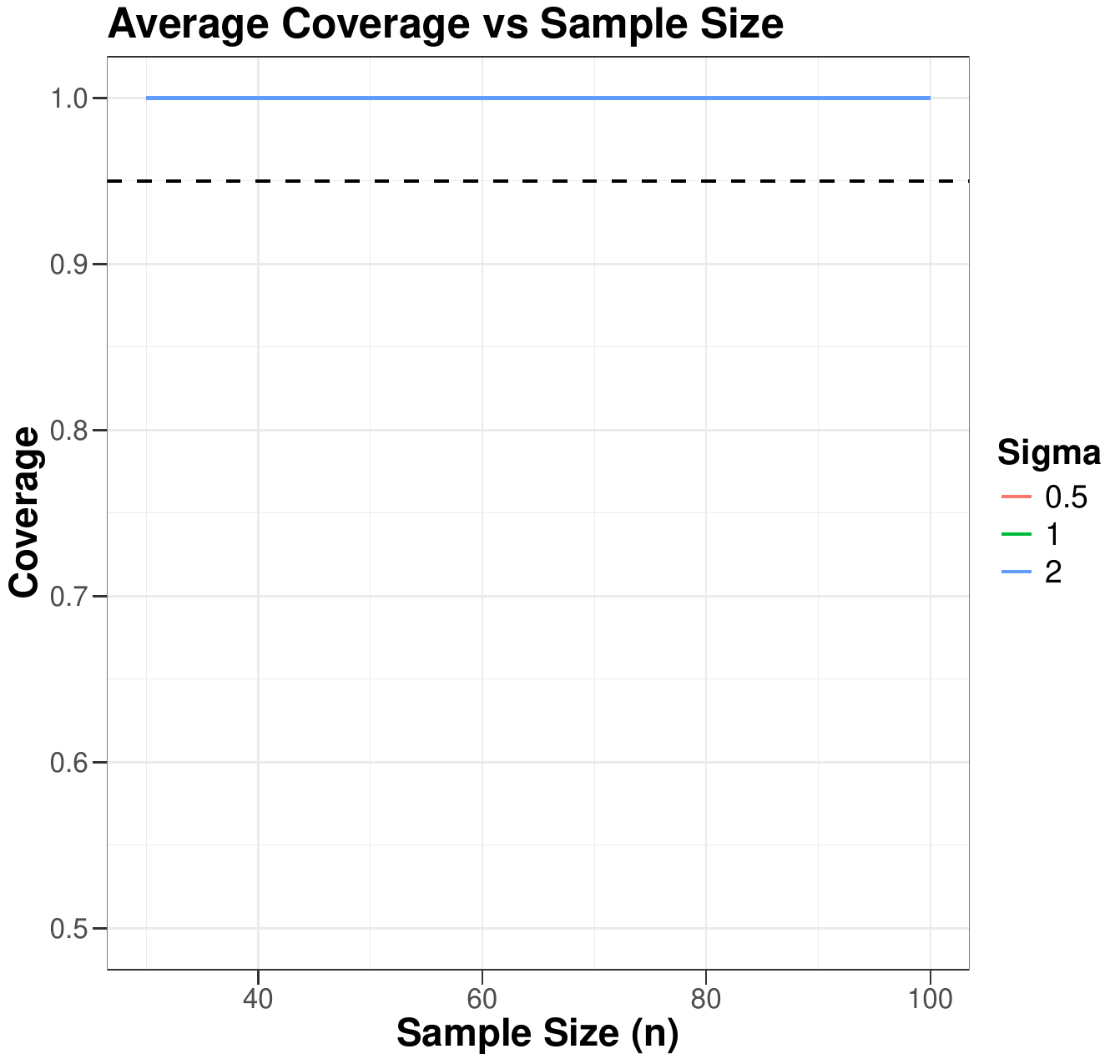}
    \includegraphics[width=0.49\linewidth]{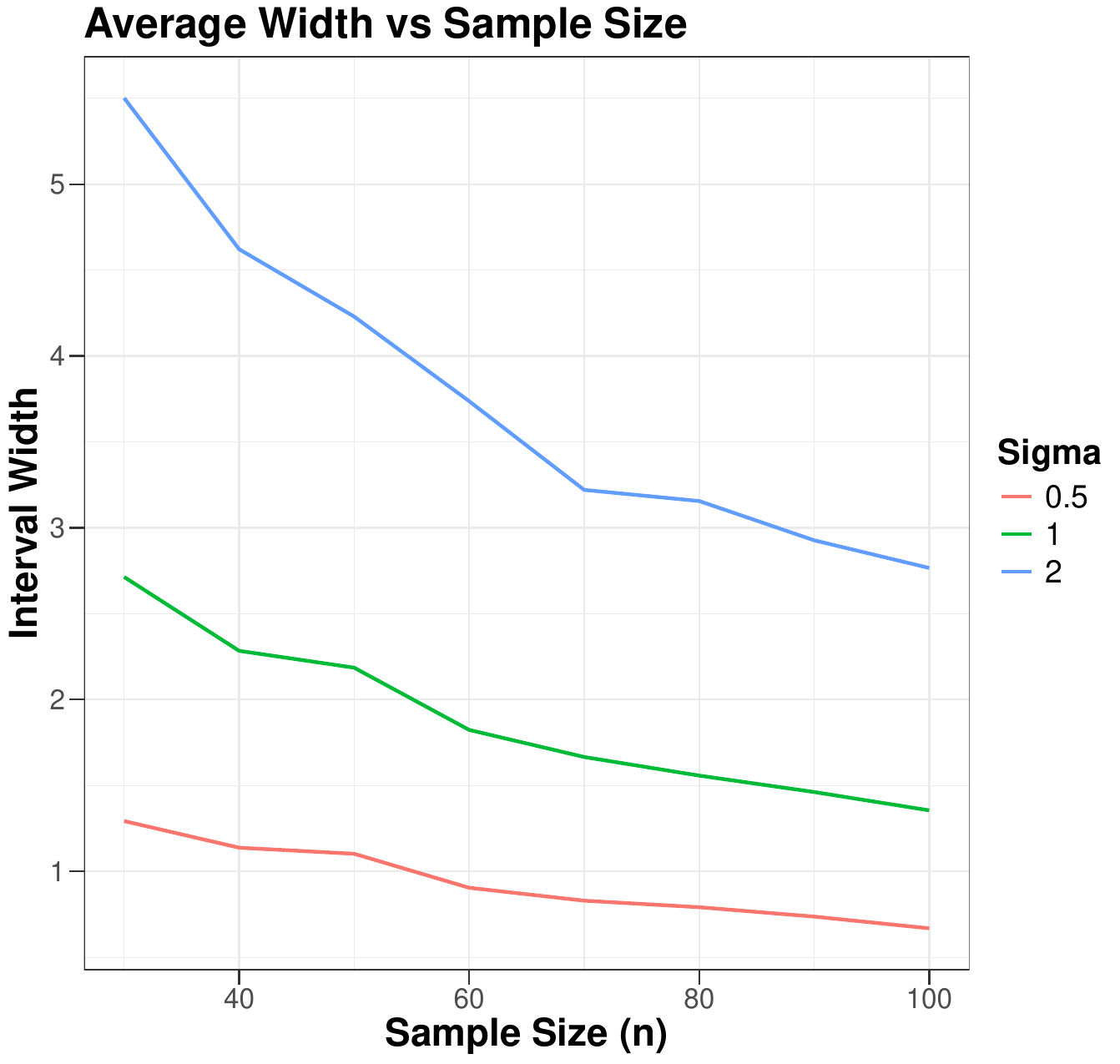}
    \caption{$\tilde{\sigma}$ CI width and coverage: for different choices of $\tilde{\sigma}$ and growing sample sizes. We choose a coverage level of $95\%$.  The number of replications is 500.}
    \label{fig:sigma_GLD}
\end{figure}

Finally, we examine the confidence regions for the asymmetry parameter $\chi$ and the steepness parameter $\xi$. In \Cref{fig:cov_area_gld}, we display the average volume and coverage of the proposed confidence region for $\chi \in \{-0.8,-0.4,0.0,0.4,0.8\}, \xi \in \{0.2,0.35,0.5,0.65,0.8\}$ as the sample size increases. We compute the volume and coverage for each of the pairs. The volume of the confidence region decreases with the sample size while maintaining the nominal coverage for each pair $(\chi,\xi)$. Since $\chi$ is the asymmetry parameter, we observe a similar behavior for symmetric points around 0 for $\chi$. Fig 14 of \cite{chalabi2012flexible} suggests that as $\xi$ increases, the distribution becomes increasingly steep. The steeper it is, the more concentrated it is. Hence, due to decreased uncertainty, the volume decreases as we increase $\xi$ for each fixed $\chi$. We now look a bit deeper into one sample instance of the confidence region for fixed $(\chi,\xi)$. We consider three representative scenarios with different parameter combinations, as detailed in Table~\ref{tab:chi-xi_special_cases}. The results are presented in Fig.~\ref{fig:chi_xi_GLD}. As the sample size increases, the confidence regions contract around the true parameter values, demonstrating improved estimation precision.

\begin{table}[h]
    \centering
    \begin{tabular}{c c c}
    \hline
    Distribution  & $\chi$ & $\xi$ \\ 
    \hline  
        Uniform & $0$ & $\frac{1}{2} - \frac{1}{\sqrt{5}}$ \\
        Normal & $0$ & $0.3661$ \\
        Log-normal &  $0.2844$ & $0.3583$ 
    \end{tabular}
     \caption{Special cases of the GLD: $(\chi,\xi)$ values.}
    \label{tab:chi-xi_special_cases}
\end{table}

\begin{figure}[h!]
    \centering
    \includegraphics[width=0.9\linewidth]{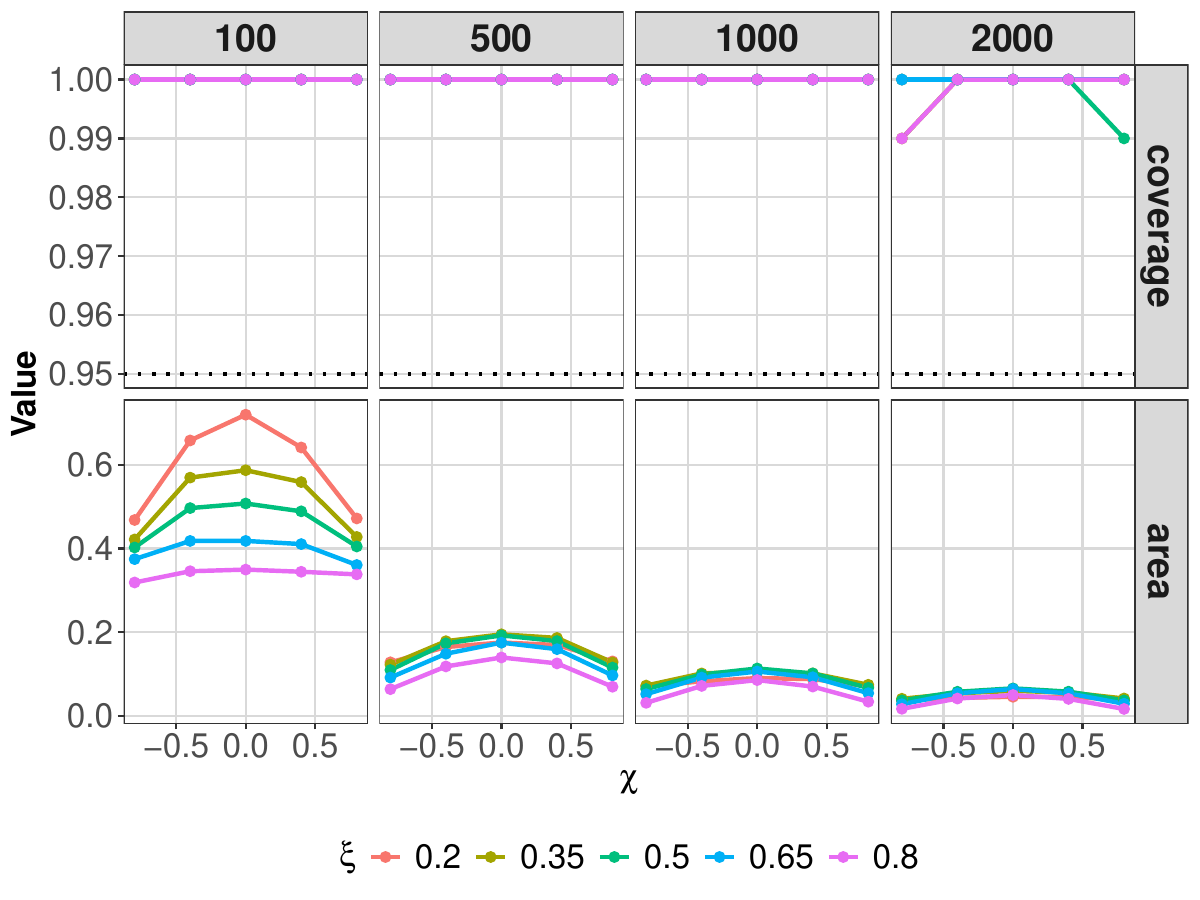}
    \caption{Comparison of Coverage and Area for different values of $\chi,\xi$ for different sample sizes. We choose a coverage level of $95\%$. The number of replications is 500.}
    \label{fig:cov_area_gld}
\end{figure}

\begin{figure}[h!]
    \centering
     \includegraphics[width=0.8\linewidth]{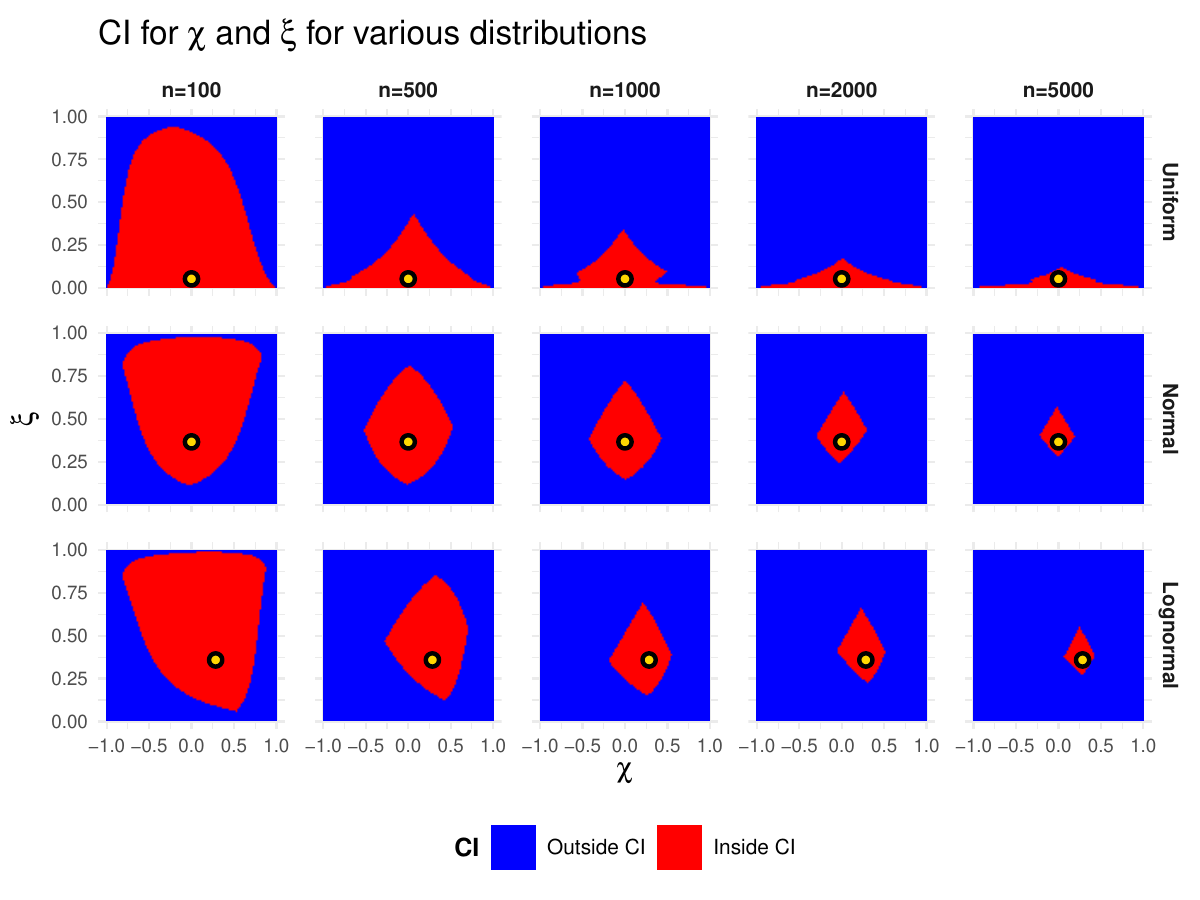}
    \caption{CI for $\chi$ and $\xi$ for Uniform, Normal and Lognormal (see Table~\ref{tab:chi-xi_special_cases}). We choose a coverage level of $95\%$. We choose edge set of equations for constructing the regions.}
    \label{fig:chi_xi_GLD}
\end{figure}

\subsection{Results: real-life data}
\label{subsec:data analysis}
In this subsection, we illustrate the practical utility of the proposed methods by applying them to real-world datasets. The aim is to demonstrate how the techniques developed in the previous sections can be effectively used to draw meaningful statistical inferences and gain insights from empirical data. We compare our method with a bootstrap-based approach on a real dataset. For estimation, we use the estimates proposed in \cite{chalabi2012flexible}. 
\begin{align*}
\tilde{s} &= \frac{S(3/4|\chi, \xi) + S(1/4|\chi, \xi) - 2S(2/4|\chi, \xi)}{S(3/4|\chi, \xi) - S(1/4|\chi, \xi)}, \\
\tilde{\kappa} &= \frac{S(7/8|\chi, \xi) - S(5/8|\chi, \xi) + S(3/8|\chi, \xi) - S(1/8|\chi, \xi)}{S(6/8|\chi, \xi) - S(2/8|\chi, \xi)}.
\end{align*}
We compute the empirical version of $\tilde{s},\tilde{\kappa}$, i.e.
\begin{align*}
   & \hat{\tilde{s}} = \frac{\hat{\pi}_{3/4}+\hat{\pi}_{1/4}-2\hat{\pi}_{2/4}}{\hat{\pi}_{3/4}-\hat{\pi}_{1/4}} \ ,\ \hat{\tilde{\kappa}} = \frac{\hat{\pi}_{7/8}-\hat{\pi}_{5/8}+\hat{\pi}_{3/8}-\hat{\pi}_{1/8}}{\hat{\pi}_{6/8}-\hat{\pi}_{2/8}}
\end{align*}
where $\hat{\pi}_q$ is the $q$ th sample quantile of the data. 
We then estimate $\chi,\xi$ from the following set of nonlinear equations:
\begin{align*}
    & \tilde{s} = \hat{\tilde{s}} , \ \tilde{\kappa} = \hat{\tilde{\kappa}}
\end{align*}
{We use the \texttt{optimize} function in R to solve the nonlinear system of equations. We then generate bootstrap resamples and compute the corresponding estimates  $(\chi^*,\xi^*)$  for each resample. From these estimates, we compute the coordinate-wise median and retain the 95\% of bootstrap parameter estimates closest to this median in Euclidean distance. The bootstrap confidence region is taken to be the convex hull of these retained points. We illustrate the procedure using datasets discussed in \cite{dedduwakumara2021efficient}.}

\subsubsection{Small sample: Twin Study data}
{Consider the first example, a small-sample setting drawn from the Indiana Twin Study data, which consists of the birth weights of 123 twins. The data originally comes from the PhD thesis of Dr Cynthia Moore, Department of Medical and Molecular Genetics, Indiana University School of Medicine and has also been used in \cite{karian2022comparison} and \cite{karian2000fitting}, for GLD parameter estimation. We plot the histogram and density of the data in \Cref{fig:twindata_hist}. The CI of $\tilde{\mu}$ from the first set is [5.03, 5.81] and that from second set is [5.00, 5.88]. The CI for $\tilde{\sigma}$ from the first set is [0.78, 2.54], and that from the second set is [0.88, 2.28]. Here, we take $\mathcal{U}_{RW}$ (\cite{rivera2013optimal}) while using our method. We provide the comparison of bootstrap and our method in \Cref{fig:twin1_comparison}.}

\begin{figure}[h!]
    \centering
    \includegraphics[width=0.5\linewidth]{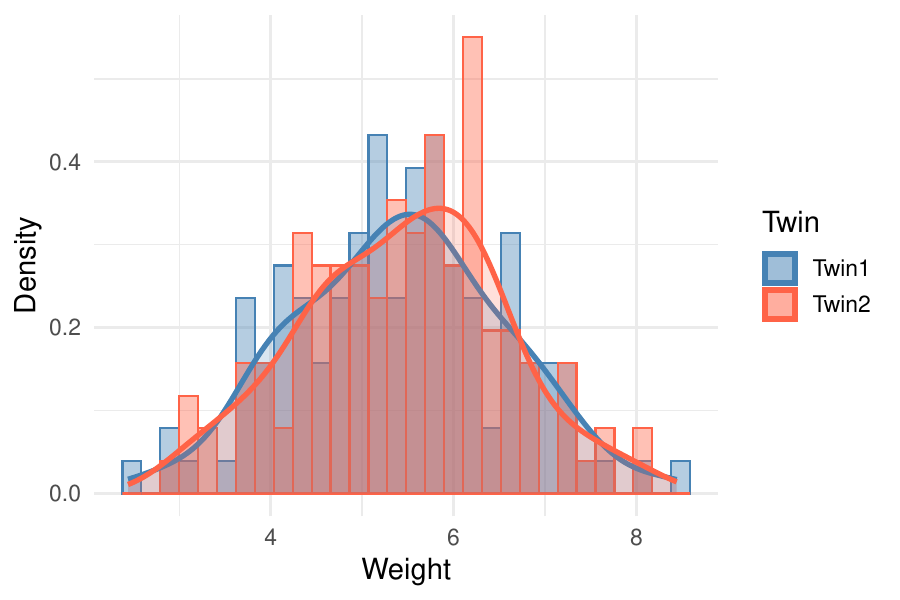}
    \caption{Histogram and Density of Birth Weights of Twins}
    \label{fig:twindata_hist}
\end{figure}

\begin{figure}[h!]
    \centering
    \includegraphics[width=0.45\linewidth]{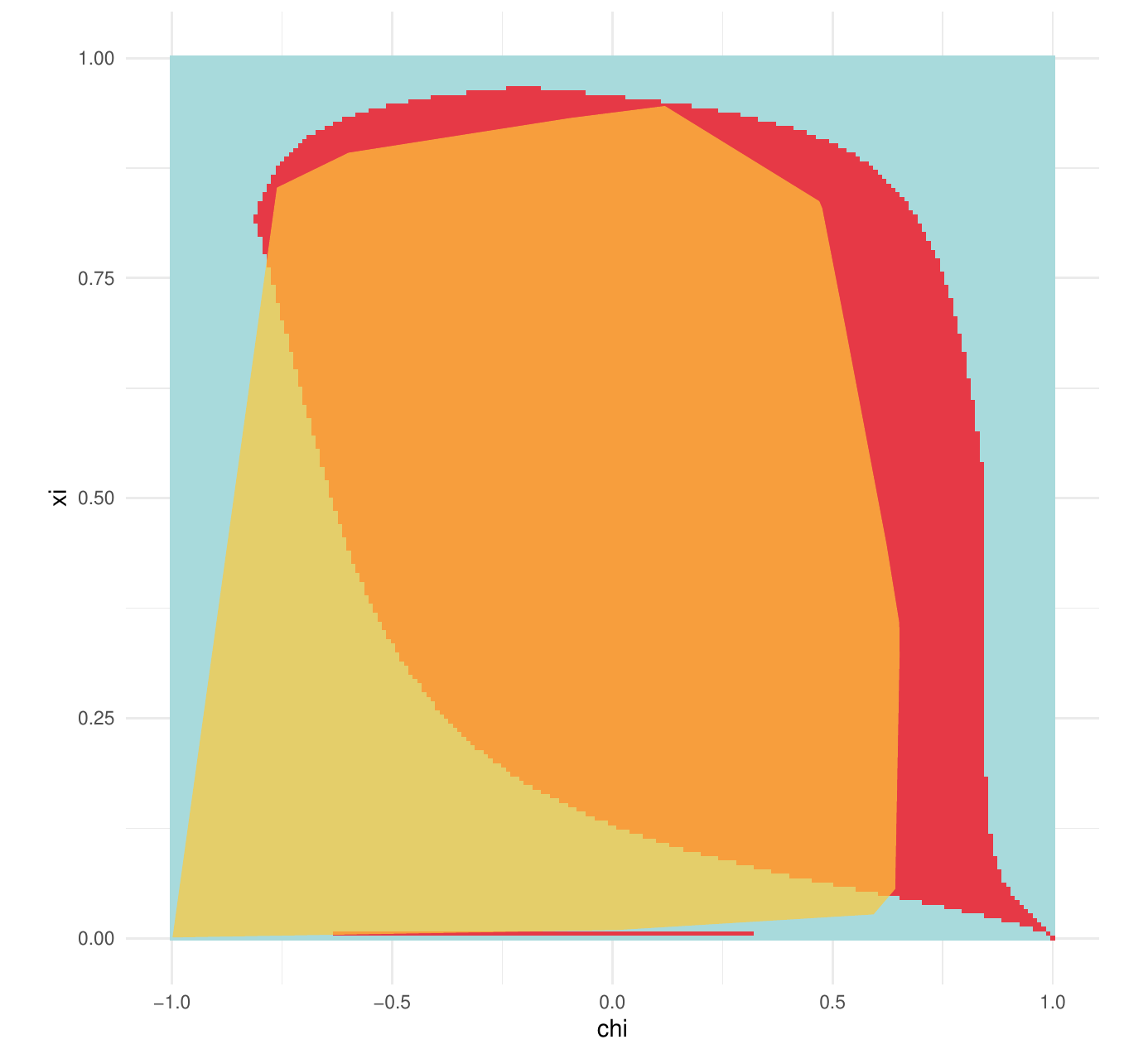}
    \includegraphics[width=0.45\linewidth]{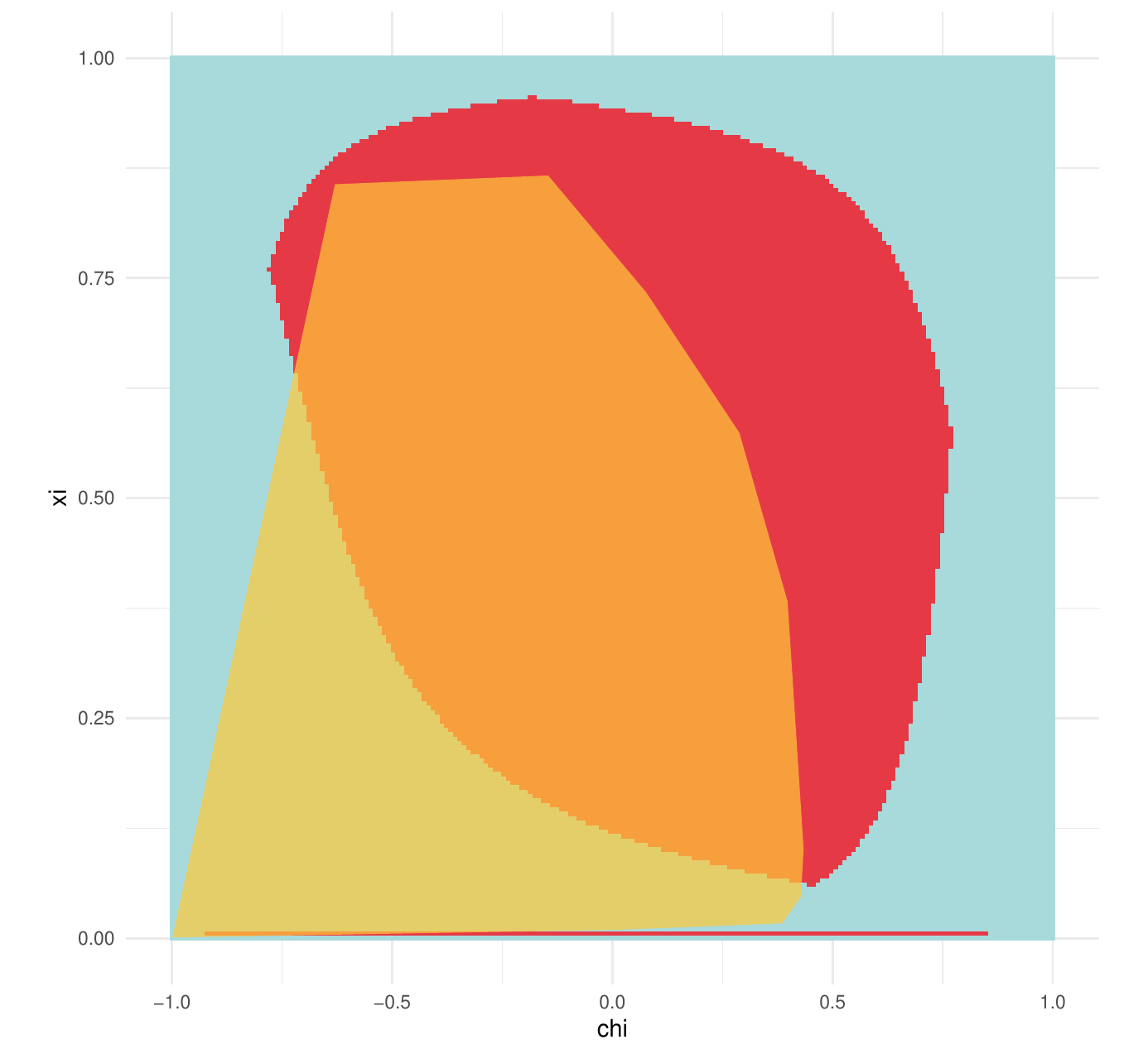}
    \caption{Comparison of Bootstrap vs our method in case of Twin data (Left - Set 1, Right - Set 2), Red one is our method and Yellow one is using Bootstrap.}
    \label{fig:twin1_comparison}
\end{figure}

The bootstrap procedure produces a confidence region that is quite symmetric with respect to $\chi$, contrasting with the confidence region produced by our proposed method. This suggests that our procedure can capture more information about the distribution than the standard bootstrap procedure. Regarding volume, our method produces slightly larger confidence regions compared to bootstrap, but this is because it provides finite sample coverage guarantees, unlike bootstrap. Often, bootstrap can fail to provide the target coverage level for finite samples, as demonstrated in the simulations for the Tukey Lambda distribution (\Cref{subsec:result_tlambda_comparison}). 

\subsubsection{Large sample: Spanish household income data}
Finally, we used our method on a large-sample dataset. In this example, we consider the total income data of Spanish households, collected in the 1980 Spanish Family Expenditure Survey (FES) described in \cite{alonso1994encuesta}. This data set consists of 23,972 observations and total income is recorded with household characteristics and expenditure in several categories. This data set is available in the Ecdat package (\cite{croissant2016package}) under the name ``BudgetFood''. We plot the density of the histogram of total income in \Cref{fig:budgetfood_hist_chixi_CI}. The data exhibit a right-skewed and leptokurtic (peaked) distribution. The similarity of its shape to the various regimes illustrated in Fig 14 of \cite{chalabi2012flexible} indicates that the GLD provides a suitable modeling framework for this data set. Using the proposed inference procedure, we obtained the confidence interval for the location parameter $\tilde{\mu}$ as $[719454,\ 742122]$, and for the scale parameter $\tilde{\sigma}$ as $[634159,\ 691116]$. In addition, the joint confidence region for the shape parameters by our method $(\chi, \xi)$ is presented in \Cref{fig:budgetfood_hist_chixi_CI}. For practical time feasability reasons, we only use the edge constraints (Eq. \ref{eq:U_k-edges}) set while implementing our method. 
Unlike the twin-study data example, here we can see that the boostrap and CDF based confidence regions are fairly small and similar, aligning with the large-sample trends we observed in our simulations (Fig. \ref{fig:cov_area_gld},\ref{fig:chi_xi_GLD}) 

\begin{figure}[H]
    \centering
    \includegraphics[width=0.45\linewidth]{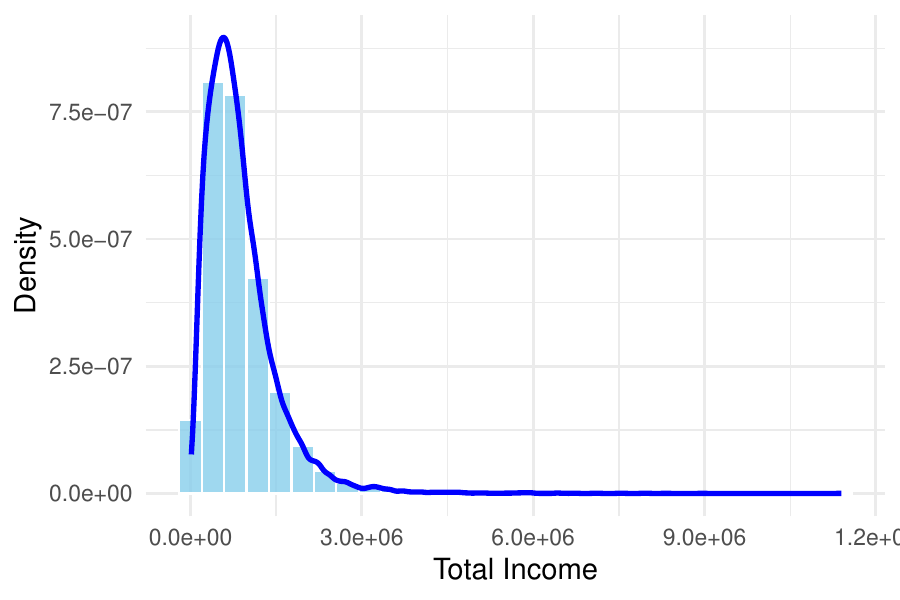}
    \includegraphics[width=0.45\linewidth]{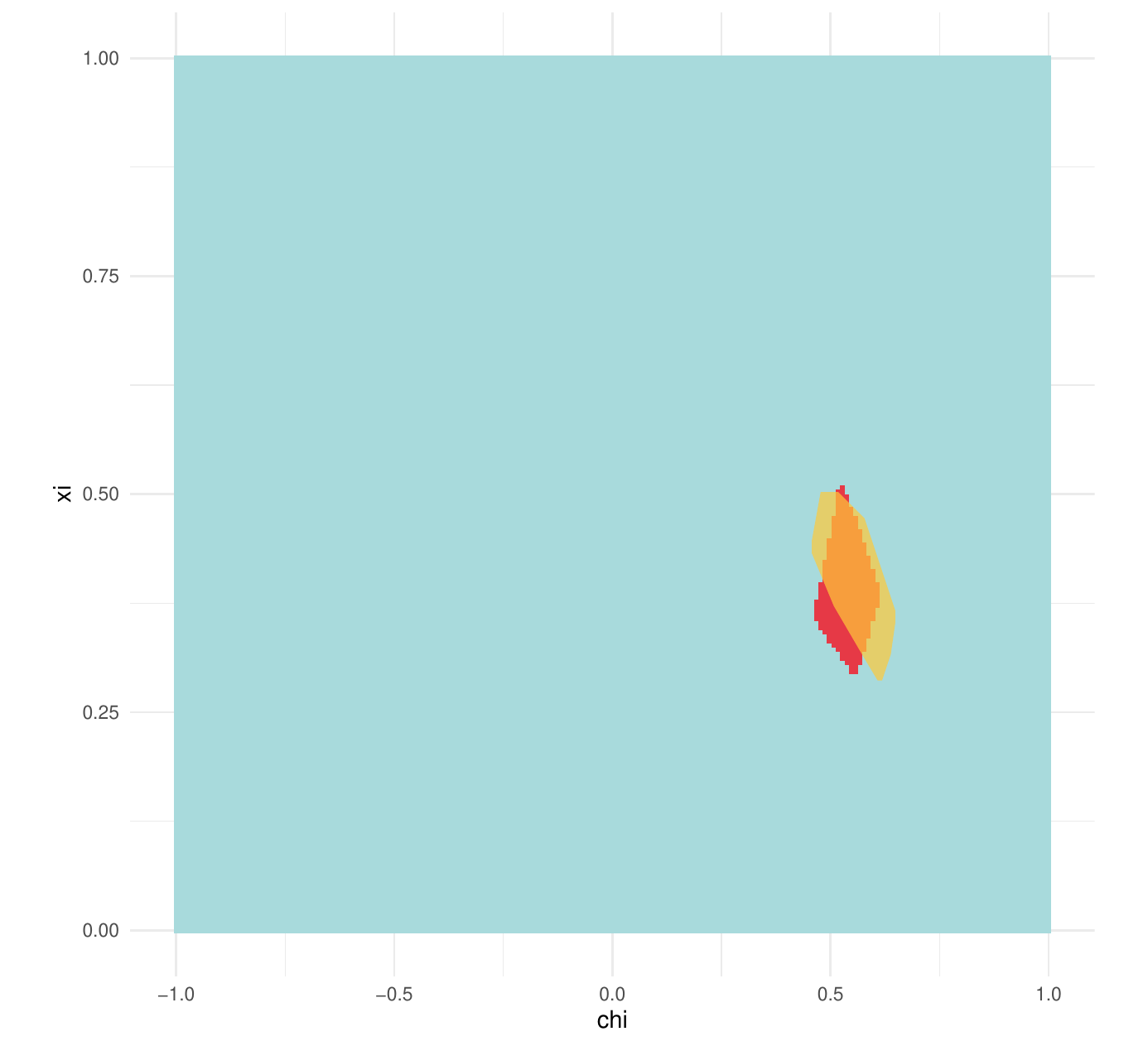}
    \caption{Histogram and Density of Total income of Spanish households (Left) and Comparison of Bootstrap vs our method for $(\chi,\xi)$ (Right), Red one is our method and Yellow one is using Bootstrap.}
    \label{fig:budgetfood_hist_chixi_CI}
\end{figure}


\section{Conclusion}
\label{sec:conclusion}

In conclusion, we have developed an inference framework for quantile-based parametric distributions that provides reliable finite-sample guarantees without requiring assumptions on where the true parameter lies within the parameter space. By inverting distribution-free confidence bands for the CDF through the known quantile function, the method avoids the irregular asymptotics that complicate likelihood-based approaches for these families.

Through simulations, we showed that the procedure yields valid finite-sample confidence intervals for the scaling parameter in the Tukey Lambda distribution across a wide range of parameter values and sample sizes. For the generalized Lambda distribution, the framework produces confidence intervals and joint confidence regions with nominal coverage and widths or volumes that decrease with the sample size, consistent with the theoretical behavior of the construction.

Several directions remain for further investigation. A detailed asymptotic analysis of the interval width would be useful for both the Tukey Lambda distribution and the GLD, particularly to understand whether the method adapts to parameter-specific irregularities in convergence rates. In addition, for the shape parameters of the GLD, a more principled strategy is needed to select $\mathcal{U}$ in Eq.~\eqref{eq:gld-chi-xi-joint-ci}. As illustrated in Fig.~\ref{fig:chi_xi_growing}, the contributions of extreme quantiles appear to play a central role in determining an effective choice of $\mathcal{U}$.

\bibliography{ref}
\clearpage

\appendix
\section{Appendix}
In this section, we first present the explicit characterization of the equations needed to construct $\widetilde{\mathrm{CI}}_{n,\alpha}$ in \Cref{eq:i-th-order-interval}. The regions of $\lambda$ are in Table 3. 
\begin{table}[h!]
\centering
\caption{Characterization for upper and lower boundaries of $\widetilde{\mathrm{CI}}_{n,\alpha,i}$ based on other conditions.}
\label{tab:solutions-tukey-lambda}
\begin{tabular}{c c c c}
\toprule
\textbf{Inequality} & \textbf{Condition on} & \textbf{Condition on} & \textbf{Solution in} \\
 & \textbf{$\ell_{n,\alpha,i},u_{n,\alpha,i}$} & \textbf{$X_{(i)}$} & \textbf{$\lambda$} \\
\midrule
$Q(\ell_{n,\alpha,i},\lambda) \leq X_{(i)}$ & $\ell_{n,\alpha,i} \leq 1/2$ & $X_{(i)} < 0$ & $\lambda \leq u^L_i$ \\
$Q(\ell_{n,\alpha,i}, \lambda) \leq X_{(i)}$ & $\ell_{n,\alpha,i} \leq 1/2$ & $X_{(i)} \ge 0$ & $\lambda \in \mathbb{R}$ \\
$Q(\ell_{n,\alpha,i}, \lambda) \leq X_{(i)}$ & $\ell_{n,\alpha,i} \geq 1/2$ & $X_{(i)} \leq 0$ & $\emptyset$ \\
$Q(\ell_{n,\alpha,i}, \lambda) \leq X_{(i)}$ & $\ell_{n,\alpha,i} \geq 1/2$ & $X_{(i)} > 0$ & $\lambda \geq \ell^L_i$ \\
\midrule
$Q(u_{n,\alpha,i}, \lambda) \geq X_{(i)}$ & $u_{n,\alpha,i} \leq 1/2$ & $X_{(i)} \geq 0$ & $\emptyset$ \\
$Q(u_{n,\alpha,i}, \lambda) \geq X_{(i)}$ & $u_{n,\alpha,i} \leq 1/2$ & $X_{(i)} < 0$ & $\lambda \geq \ell^U_i$ \\
$Q(u_{n,\alpha,i}, \lambda) \geq X_{(i)}$ & $u_{n,\alpha,i} \geq 1/2$ & $X_{(i)} \leq 0$ & $\lambda \in \mathbb{R}$ \\
$Q(u_{n,\alpha,i}, \lambda) \geq X_{(i)}$ & $u_{n,\alpha,i} \geq 1/2$ & $X_{(i)} > 0$ & $\lambda \leq u^U_i$ \\
\bottomrule
\end{tabular}
\end{table}

Now, we will prove \Cref{thm:subset_CI}. But, for that we first need a lemma. Let's first state and prove the lemma. 

\begin{lemma}
\label{lem:tukey-qf-lambda}
    Let $\widetilde{Q}(x,\lambda)$ be as defined in \Cref{eq:tukey-lambda-|X|} for $0\le x \le 1, \lambda \in \mathbb{R}$. Then, $\widetilde{Q}(x,\lambda)$ is decreasing and convex as a function of $\lambda$ for each $x \in [0,1]$. 
\end{lemma}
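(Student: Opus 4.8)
The plan is to avoid the piecewise formula for $\widetilde{Q}$ in Eq~\eqref{eq:tukey-lambda-|X|} — which would otherwise force a separate and delicate treatment of the $\lambda=0$ branch together with a continuity/differentiability check across it — by passing to a single integral representation valid for all $\lambda \in \mathbb{R}$ at once. Writing $a = (1+p)/2$ and $b = (1-p)/2$, so that $0 \le b \le 1/2 \le a \le 1$ for $p \in [0,1]$, the key observation is that for every $\lambda$ (including $\lambda = 0$)
\[
\widetilde{Q}(p,\lambda) = \int_b^a t^{\lambda-1}\,dt,
\]
since $\int_b^a t^{\lambda-1}\,dt = (a^\lambda - b^\lambda)/\lambda$ when $\lambda \ne 0$ and equals $\ln(a/b) = \ln\!\big((1+p)/(1-p)\big)$ when $\lambda = 0$. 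This reproduces exactly the definition of $\widetilde{Q}$. For $p \in (0,1)$ we have $b > 0$, so the integrand stays bounded and smooth in $\lambda$ on the compact interval $[b,a]$.

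First I would differentiate under the integral sign. Because $t \mapsto t^{\lambda-1}$ and its $\lambda$-derivatives $t^{\lambda-1}(\ln t)^k$ are continuous and uniformly bounded on $[b,a]\times K$ for any compact $K \subset \mathbb{R}$ (here $b>0$ is what keeps the logarithmic factors integrable), dominated convergence justifies
\[
\frac{\partial}{\partial \lambda}\widetilde{Q}(p,\lambda) = \int_b^a t^{\lambda-1}\ln t\,dt, \qquad \frac{\partial^2}{\partial \lambda^2}\widetilde{Q}(p,\lambda) = \int_b^a t^{\lambda-1}(\ln t)^2\,dt.
\]
In particular $\widetilde{Q}(p,\cdot)$ is $C^\infty$, so there is no need to argue smoothness at $\lambda=0$ by hand.

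Both claims then follow from elementary sign considerations. Convexity is immediate, since the integrand $t^{\lambda-1}(\ln t)^2$ is nonnegative for $t>0$, forcing the second derivative to be $\ge 0$ for every $\lambda$. For monotonicity, the decisive point is that the entire integration range lies in $(0,1]$ — indeed $a = (1+p)/2 \le 1$ — so $\ln t \le 0$ throughout $[b,a]$, whence $t^{\lambda-1}\ln t \le 0$ and the first derivative is $\le 0$. This shows $\widetilde{Q}(p,\cdot)$ is decreasing. Both conclusions hold for each fixed $p \in (0,1)$.

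The only genuine subtleties live at the boundary of the $p$-range, and I expect the justification of differentiation under the integral sign to be the one step needing real care. When $p=0$ we have $a=b=1/2$ and $\widetilde{Q}\equiv 0$, which is trivially decreasing and convex; when $p=1$ we have $b=0$ and $\int_0^a t^{\lambda-1}\,dt$ diverges for $\lambda \le 0$, so the representation and the differentiation argument degenerate. However, the confidence-band levels $\ell_{n,\alpha,i},u_{n,\alpha,i}$ that feed into \Cref{thm:subset_CI} lie strictly inside $(0,1)$, so this endpoint never arises in the application; I would therefore state and use the lemma for $p \in (0,1)$ and note the degenerate endpoint behavior separately. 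Everything else reduces to the sign of $\ln t$ on $(0,1]$ and the nonnegativity of $(\ln t)^2$, both of which are immediate once the integral representation is in place.
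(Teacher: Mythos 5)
Your proof is correct, and it takes a genuinely different route from the paper's. The paper works directly with the piecewise formula: for $\lambda \ne 0$ it writes $\widetilde{Q}(x,\lambda) = A(\lambda)/\lambda$ with $A(\lambda) = p^{\lambda} - q^{\lambda}$, $p = (1+x)/2$, $q = (1-x)/2$, computes the first two $\lambda$-derivatives as rational expressions, and reduces their signs to the monotonicity of the auxiliary functions $g(t) = e^{t\lambda}(t\lambda - 1)$ and $h(t) = e^{t\lambda}(\lambda^{2}t^{2} - 2\lambda t + 2)$; the case $\lambda = 0$ is then handled separately via two L'H\^{o}pital limit computations giving $\widetilde{Q}'(x,0) = \tfrac{1}{2}\big[(\ln p)^{2} - (\ln q)^{2}\big]$ and $\widetilde{Q}''(x,0) = \tfrac{1}{3}\big[(\ln p)^{3} - (\ln q)^{3}\big]$, together with a check that these agree with the limits of the $\lambda \ne 0$ expressions. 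Your representation $\widetilde{Q}(p,\lambda) = \int_{b}^{a} t^{\lambda-1}\,dt$ collapses all of this: it is a single formula valid for every $\lambda$ including $\lambda = 0$, differentiation under the integral sign (legitimately justified by compactness of $[b,a]$ with $b>0$) produces both derivatives at once, and the signs are immediate from $\ln t \le 0$ and $(\ln t)^{2} \ge 0$ on $(0,1]$. What your approach buys: no casework, smoothness at $\lambda = 0$ for free (the paper must argue the ``smooth extension'' by hand), and for $p \in (0,1)$ even strict negativity of $\widetilde{Q}'$, which matters in \Cref{thm:subset_CI} where $\widetilde{Q}'(\ell_{n,\alpha,i},\lambda_0)$ appears in a denominator. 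What it costs: an appeal to dominated convergence, and the exclusion of $p=1$, where $b=0$ and the integral diverges for $\lambda \le 0$; note, however, that the paper's own manipulations also degenerate there (they involve $\ln q = \ln 0$), so this restriction is one the paper's proof silently shares rather than a gap specific to yours. One small correction to your closing remark: the band levels are not always strictly inside $(0,1)$ (e.g., the DW upper limit can equal $1$ at $i=n$), but since the bound in \Cref{thm:subset_CI} divides by $\ln(p_{n,\alpha,i})$ it already implicitly requires $u_{n,\alpha,i} < 1$, so your restricted statement of the lemma suffices for the theorem exactly as the paper uses it.
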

\begin{proof}
First let's do the analysis for $\lambda \ne 0$.
   Suppose $p = \frac{1+x}{2}, q = \frac{1-x}{2} = 1-p$. Then, $p \ge \frac{1}{2}, q \le \frac{1}{2}$.
   Then, we can write $\widetilde{Q}(x,\lambda)$ as 
   \[\widetilde{Q}(x,\lambda) = \frac{A(\lambda)}{\lambda}\]
   where \[A(\lambda) = p^{\lambda} - q^{\lambda} \]
We denote $\widetilde{Q}'(x,\lambda), \widetilde{Q}''(x,\lambda)$ as 1st and 2nd partial derivatives of $\widetilde{Q}(x,\lambda)$ with respect to $\lambda$ for fixed $x$. Then, \begin{align*}
       & \widetilde{Q}'(x,\lambda) = \frac{\lambda A'(\lambda)-A(\lambda)}{\lambda^2} \\
       & \widetilde{Q}''(x,\lambda) = \frac{\lambda^2A''(\lambda)-2\lambda A'(\lambda)+2 A(\lambda)}{\lambda^3}
   \end{align*}
   For showing, $\widetilde{Q}(x,\lambda)$ is decreasing as a function of $\lambda$, it is enough to show that $\lambda A'(\lambda) \le A(\lambda)$. So, for that, it is enough to show that 
   \begin{align*}
       & \lambda A'(\lambda) \le A(\lambda) \\
       & \iff \lambda (p^{\lambda}\ln p - q^{\lambda}\ln q) \le p^{\lambda} - q^{\lambda} \\
       & \iff p^{\lambda}(\lambda \ln p - 1) \le q^{\lambda}(\lambda \ln q - 1) 
   \end{align*}
   Now, note that $p \ge q$. Let, $a = \ln p, b = \ln q$, so clearly $0 > a > b$. So, it is enough to show that
   \[e^{a\lambda}(a\lambda-1) \le e^{b\lambda}(b\lambda-1)\]
   So, we will show $g(x) = e^{x\lambda}(x\lambda-1)$ is decreasing for each $\lambda \in \mathbb{R}$ for $x < 0$.
   For that, it is enough to check the sign of $g'(x)$.
   So, for $x < 0$,
   \begin{align*}
       & g'(x) = \lambda e^{x\lambda} + \lambda e^{x\lambda}(x\lambda-1) = x\lambda^2 e^{x\lambda} < 0
   \end{align*}
   Hence, $g$ is decreasing, implying $\widetilde{Q}(x,\lambda)$ is a decreasing function of $\lambda$ for $\lambda \ne 0$. 
   Now, let's look into the case when $\lambda = 0$. Then, 
 \begin{align*}
      \widetilde{Q}'(x,0) & = \lim_{\lambda\to 0}\frac{\frac{1}{\lambda}\left[ \left(\frac{1+x}{2}\right)^{\lambda}-\left( \frac{1-x}{2} \right)^{\lambda} \right]-\ln\left(\frac{1+x}{1-x}\right)}{\lambda} \\
     & = \lim_{\lambda \to 0} \frac{\left(\frac{1+x}{2}\right)^{\lambda}-\left(\frac{1-x}{2}\right)^{\lambda}-\lambda\ln(1+x)+\lambda\ln(1-x)}{\lambda^2} \\
     & = \lim_{\lambda \to 0} \frac{\left(\frac{1+x}{2}\right)^{\lambda}\ln\left(\frac{1+x}{2}\right) - \ln(1+x) - \left(\frac{1-x}{2}\right)^{\lambda}\ln\left(\frac{1-x}{2}\right)+\ln(1-x)}{2\lambda} \hspace{0.2cm} [\text{L Ho'pital}] \\
     & = \lim_{\lambda\to 0} \frac{\left(\frac{1+x}{2}\right)^{\lambda}\left( \ln\left(\frac{1+x}{2}\right)\right)^2 - \left(\frac{1-x}{2}\right)^{\lambda}\left( \ln\left(\frac{1-x}{2}\right)\right)^2 }{2} \hspace{0.2cm} [\text{L Ho'pital}] \\
     & = \frac{\left(\ln\left(\frac{1+x}{2}\right)\right)^2 - \left(\ln\left(\frac{1-x}{2}\right)\right)^2 }{2}
 \end{align*}
Now, since \[0 > \ln\left(\frac{1+x}{2}\right) > \ln\left(\frac{1+x}{2}\right)  ,\]
we clearly have $\widetilde{Q}'(x,\lambda) \le 0$. Also, note that $\lim\limits_{\lambda \to 0}\widetilde{Q}'(x,\lambda) = \widetilde{Q}'(x,0)$. So, this is a smooth extension to $\lambda = 0$.
Hence, $\widetilde{Q}(x,\lambda)$ is decreasing as a function of $\lambda$ for each $x \in [0,1]$.   
    Now, next we will show that it is convex as well. For that, it is enough to show that $\widetilde{Q}''(x,\lambda) \ge 0$. First assume $\lambda \ne 0$. So, we will show 
   \begin{align*}
       & \lambda^2 A''(\lambda) \ge 2\lambda A'(\lambda)-2A(\lambda), \hspace{0.3cm} \text{for} \hspace{0.2cm} \lambda \ge 0 \\
       & \lambda^2 A''(\lambda) \le 2\lambda A'(\lambda)-2A(\lambda), \hspace{0.3cm} \text{for} \hspace{0.2cm} \lambda \le 0
   \end{align*}
   So, for $\lambda \ge 0$, 
   \begin{align*}
       & \lambda^2 A''(\lambda) \ge 2\lambda A'(\lambda)-2A(\lambda)\\
       & \iff \lambda^2p^{\lambda}(\ln p)^2 -\lambda^2q^{\lambda}(\ln q)^2 \ge 2\lambda p^{\lambda}\ln p - 2\lambda q^{\lambda}\ln q - 2p^{\lambda} + 2q^{\lambda} \\
       & \iff \lambda^2p^{\lambda}(\ln p)^2 -2\lambda p^{\lambda}\ln p + 2p^{\lambda} \ge \lambda^2q^{\lambda}(\ln p)^2 -2\lambda q^{\lambda}\ln p + 2q^{\lambda}
   \end{align*}
   So, assuming $a = \ln p, b = \ln q$, $a > b$. Then, it is enough to show that 
   \[ e^{a\lambda}(\lambda^2a^2 - 2\lambda a+2) \ge e^{b\lambda}(\lambda^2b^2 - 2\lambda b+2) \]
   For that, we will show that $h(x) = e^{x\lambda}(\lambda^2x^2-2\lambda x + 2)$ is increasing for $\lambda \ge 0$. For that, 
   \begin{align*}
       & h'(x) = e^{x\lambda}\lambda(\lambda^2x^2-2\lambda x+2) + e^{x\lambda}(2x\lambda^2-2\lambda) \\
       & = \lambda^3 x^2 e^{x\lambda} \ge 0
   \end{align*}
   Now, for $\lambda \le 0$,
   it is enough to show that 
   \[ e^{a\lambda}(\lambda^2a^2 - 2\lambda a+2) \le e^{b\lambda}(\lambda^2b^2 - 2\lambda b+2) \]
   For that, we will show that $h(x) = e^{x\lambda}(\lambda^2x^2-2\lambda x + 2)$ is decreasing for $\lambda \le 0$. For that, 
   \begin{align*}
       & h'(x) = e^{x\lambda}\lambda(\lambda^2x^2-2\lambda x+2) + e^{x\lambda}(2x\lambda^2-2\lambda) \\
       & = \lambda^3 x^2 e^{x\lambda} \le 0
   \end{align*} 
  Finally, let's do for $\lambda = 0$.
  For $\lambda \ne 0$,
 \[
\widetilde{Q}'(x,\lambda) 
= \frac{\lambda \left(p^{\lambda} \ln p - q^{\lambda} \ln q\right) - \left(p^{\lambda} - q^{\lambda}\right)}{\lambda^2}
\]
Since, $\widetilde{Q}$ is smoothly extended at $\lambda = 0$,
\begin{align*}
     \widetilde{Q}''(x,0) & = \lim_{\lambda \to 0} \frac{\widetilde{Q}'(x,\lambda)-\widetilde{Q}'(x,0)}{\lambda} \\
    & = \lim_{\lambda \to 0} \frac{\lambda (p^{\lambda}\ln p-q^{\lambda}\ln q) - (p^{\lambda}-q^{\lambda}) - \frac{\lambda^2(\ln p)^2}{2} + \frac{\lambda^2(\ln q)^2}{2} }{\lambda^3} \\
    & = \lim_{\lambda \to 0} \frac{(p^{\lambda}\ln p - q^{\lambda}\ln q) + \lambda (p^{\lambda}(\ln p)^2-q^{\lambda}(\ln q)^2)-(p^{\lambda}\ln p - q^{\lambda}\ln q)-\lambda (\ln p)^2 + \lambda (\ln q)^2}{3\lambda^2} \\
    & = \lim_{\lambda \to 0} \frac{p^{\lambda}(\ln p)^2 - q^{\lambda}(\ln q)^ - (\ln p)^2 + (\ln q)^2}{3\lambda} \\
    & = \lim_{\lambda \to 0}\frac{p^{\lambda}(\ln p)^3 - q^{\lambda}(\ln q)^3}{3} \\
    & = \frac{(\ln p)^3-(\ln q)^3}{3}
\end{align*}
Since, $p > q,$ $\widetilde{Q}''(x,0) \ge 0$. Hence, $\widetilde{Q}''(x,\lambda) \ge 0$ for each $x \in [0,1]$, implying $\widetilde{Q}(x,\lambda)$ is a convex function of $\lambda$.
\end{proof}

Now, we are done proving the lemma. So, we finally we prove the \Cref{thm:subset_CI}. 
\begin{proof}[\textbf{Proof of \Cref{thm:subset_CI}}]
\label{proof:subset_CI}
For simplicity of notation, we will use $l_i$ in place of $\ell_{n,\alpha,i}$ and $u_i$ in place of $u_{n,\alpha,i}$ for this proof. With $U_{(i)} = $ Let's first look at the case for $\lambda_0 \ne 0$. Then, for $0 \le p \le 1$, \[\widetilde{Q}(p,\lambda) = \frac{1}{\lambda}\left[ \left(\frac{1+p}{2}\right)^{\lambda}-\left( \frac{1-p}{2} \right)^{\lambda} \right].\]
Consider the constraint
\[
A \;=\; \left\{\lambda : \widetilde{Q}(l_i, \lambda) \;\le\; \widetilde{Q}(U_{(i)}, \lambda_0)\right\} 
\]
Here $U_{(i)}$ is the $i^{th}$ uniform order statistic, and $\lambda_0$ is the truth. Now, by \Cref{lem:tukey-qf-lambda}, $\widetilde{Q}(x,\lambda)$ is increasing as a function of $x$, and decreasing as a function of $\lambda$. Hence, we have, for $\lambda \ge \lambda_0$,
    \[\widetilde{Q}(l_i,\lambda) \le \widetilde{Q}(U_{(i)},\lambda) \le \widetilde{Q}(U_{(i)},\lambda_0) \]
Now, consider $\lambda < \lambda_0$. Let $\lambda_1$ denote the smallest such value satisfying
\[
\widetilde{Q}(l_i, \lambda_1) \;=\; \widetilde{Q}(U_{(i)}, \lambda_0), \qquad \lambda_1 < \lambda_0.
\]
By convexity,
\[
\widetilde{Q}(l_i, \lambda_1) 
\;\ge\; \widetilde{Q}(l_i, \lambda_0) \;+\; \widetilde{Q}^{\prime}(l_i, \lambda_0)\,(\lambda_1 - \lambda_0),
\]
where $\widetilde{Q}^{\prime}$ denotes the derivative with respect to $\lambda$. Since $\widetilde{Q}$ is decreasing as a function of $\lambda$ (i.e., $\widetilde{Q}'(l_i,\lambda) \le 0$),
\[
\lambda_1 
\ge \lambda_0 \;+\; \frac{\widetilde{Q}(U_{(i)}, \lambda_0) - \widetilde{Q}(l_i, \lambda_0)}{\widetilde{Q}^{\prime}(l_i, \lambda_0)}.
\]
It is clear from the previous analysis, that $g'(p)$ (as in the proof of \Cref{lem:tukey-qf-lambda}) is nonzero unless $\lambda = 0$ or $p = 0$. For $\lambda = 0$, as well, $\widetilde{Q}'$ is not 0 unless $p = 0$. Hence the superset is of the form $[\lambda_1,\infty)$. Now, consider the other side. Define
\begin{align*}
    & B = \{\lambda : \widetilde{Q}(U_{(i)},\lambda_0) \le \widetilde{Q}(u_i,\lambda) \}  
\end{align*}
Similarly, $\lambda \le \lambda_0$ will automatically be part of the superset, since for $\lambda \le \lambda_0$,
\[\widetilde{Q}(U_{(i)},\lambda_0) \le \widetilde{Q}(U_{(i)},\lambda) \le \widetilde{Q}(u_i,\lambda) \]
Then consider $\lambda > \lambda_0$. Let $\lambda_2$ be the greatest such value satisfying
\[\widetilde{Q}(u_i,\lambda_2) = \widetilde{Q}(U_{(i)},\lambda_0), \hspace{0.5cm} \lambda_2 > \lambda_0\]
We have $\widetilde{Q}(u_i, \lambda_0) \ge \widetilde{Q}(U_{(i)}, \lambda_2)$ because $\lambda_2 \ge \lambda_0$. By convexity, we get
\[
\widetilde{Q}(u_i, \lambda_0) \ge \widetilde{Q}(u_i, \lambda_2) + \widetilde{Q}'(u_i, \lambda_2)(\lambda_0 - \lambda_2),
\]
which implies
\[
\lambda_2 \le \lambda_0 - \frac{\widetilde{Q}(u_i, \lambda_0) - \widetilde{Q}(u_i, \lambda_2)}{\widetilde{Q}'(u_i, \lambda_2)}.
\]
This is hard to solve in explicit form because of $\widetilde{Q}'(u_i, \lambda_2)$ in the denominator. We rather try to give bounds. Let's first look at the following result. \\
\textbf{Claim:}
For $\lambda_2 \ne 0$, $\quad p_i = \frac{1 + u_i}{2},\;q_i = \frac{1-u_i}{2}$
\[
\frac{p_i^{\lambda_2}\ln p_i - q_i^{\lambda_2}\ln q_i}{\lambda_2} \le \left(\frac{1}{\lambda_2} + \ln(p_i)\right)\widetilde{Q}(u_i, \lambda_2).
\]
\begin{proof}
    Note that, if we can show 
\[g_i(x) = \frac{p^{\lambda_2}\ln p}{\lambda_2} - \left( \frac{1}{\lambda_2} + \ln p_i \right)\frac{p^{\lambda_2}}{\lambda_2} \]
is decreasing in $[q_i,p_i]$, that will prove the claim. But, this is easy to see since
\[g_i'(x) = p^{\lambda_2-1}\ln x + \frac{x^{\lambda_2-1}}{\lambda_2} - \left( \frac{1}{\lambda_2} + \ln p_i \right)x^{\lambda_2 - 1} = x^{\lambda_2-1}\ln\left(\frac{x}{p_i}\right) \le 0 \]
since $x \le p_i$. That proves the claim.
\end{proof}
Hence, till now we proved that for $\lambda_2 \ne 0$,
\[
\frac{p_i^{\lambda_2}\ln p_i - q_i^{\lambda_2}\ln q_i}{\lambda_2} \le \left(\frac{1}{\lambda_2} + \ln(p_i)\right)\widetilde{Q}(u_i, \lambda_2).
\]
Note that
\[
\widetilde{Q}'(u_i, \lambda_2) = \frac{p_i^{\lambda_2}\ln p_i - q_i^{\lambda_2}\ln q_i}{\lambda_2} - \frac{1}{\lambda_2} \widetilde{Q}(u_i, \lambda_2),\quad\mbox{with}\quad p_i = \frac{1 + u_i}{2},\;q_i = \frac{1-u_i}{2}.
\]
Hence, for $\lambda_2 \ne 0$,
\[
\widetilde{Q}'(u_i, \lambda_2) \le \left(\frac{1}{\lambda_2} + \ln(p_i)\right)\widetilde{Q}(u_i, \lambda_2) - \frac{1}{\lambda_2} \widetilde{Q}(u_i, \lambda_2) = \ln (p_i) \widetilde{Q}(u_i,\lambda_2)
\]
But, for $\lambda_2 = 0$,
\begin{align*}
    & \frac{(\ln p_i)^2-(\ln q_i)^2}{2} \le (\ln p_i)(\ln p_i - \ln q_i)\\
    & \iff (\ln p_i)^2 - (\ln q_i)^2 \le 2 (\ln p_i)^2 - 2\ln p_i \ln q_i \\
    & \iff (\ln p_i - \ln q_i)^2 \ge 0,
\end{align*}
which is true. Hence, we get for $\lambda_2 \in \mathbb{R}$,
\[
\widetilde{Q}'(u_i, \lambda_2) \le \ln (p_i) \widetilde{Q}(u_i,\lambda_2)
\]
So, we finally get
\[\lambda_2 \le \lambda_0 - \frac{\widetilde{Q}(u_i,\lambda_0)-\widetilde{Q}(U_{(i)},\lambda_0)}{\ln (p_i)\widetilde{Q}(U_{(i)}, \lambda_0)} \]
So, finally we get the following superset:
\[ \left[ \lambda_0 + \frac{\widetilde{Q}(U_{(i)},\lambda_0)-\widetilde{Q}(\ell_{n,\alpha,i},\lambda_0)}{\widetilde{Q}'(\ell_{n,\alpha,i},\lambda_0)}, \lambda_0 - \frac{\widetilde{Q}(u_{n,\alpha,i},\lambda_0)-\widetilde{Q}(U_{(i)},\lambda_0)}{\ln (p_i)\widetilde{Q}(U_{(i)}, \lambda_0)} \right] \]

\end{proof}

\end{document}